\newtheorem{theorem}{Theorem}
\newtheorem{corollary}{Corollary}
\newtheorem{proposition}{Proposition}
\theoremstyle{definition}
\newtheorem{remark}{Remark}
\newtheorem{deff}{Definition}
\newtheorem{claim}{Claim}
\newtheorem{prop}{Property}
\newtheorem{lemma}{Lemma}
\newcommand{\ie}{\textit{i.e.} }
\newcommand{\ind}[1]{ \mathbf{I} \left( #1 \right) }
\newcommand{\nats}[0]{ \mathbb{N} }
\newcommand{\lattice}[0]{ \mathbb{N}^S }
\newcommand{\funcs}[0]{ \mathcal F }
\newcommand{\epsi}[0]{ \varepsilon }
\newcommand{\reals}[0]{ \mathbb{R}^+ }
\renewcommand{\vec}[1]{ \mathbf{ #1 } }
\newcommand{\lone}[1]{ \lVert #1 \rVert_1}
\DeclareMathOperator*{\argmax}{arg\,max}
\begin{document} 

\title{Fast Maximization of Non-Submodular, Monotonic Functions on the Integer Lattice}
\author{Alan Kuhnle, J. David Smith, Victoria G. Crawford, My T. Thai}
\maketitle





\begin{abstract} 
  The optimization of submodular functions on the integer lattice has received
  much attention recently, but the objective functions of many applications are
  non-submodular. 
  We provide two approximation algorithms for maximizing a non-submodular function
  on the integer lattice subject to a cardinality constraint; these are the 
  first algorithms for this purpose that have polynomial query complexity. We propose
  a general framework for influence maximization on the integer lattice that generalizes
  prior works on this topic, 
  and we demonstrate the efficiency of our algorithms in this context.
\end{abstract} 

\section{Introduction}
As a natural extension of finite sets $S$ (equivalently, $\{0,1\}^S$), 
optimization of discrete functions on the integer lattice $\lattice$ has received attention recently
\citep{Alon2012,Demaine2014,Soma2015}. As an example, consider the placement
of sensors in a water network \citep{Krause2008a}; in the set
version, each sensor takes a value in $\{ 0,1 \}$, which corresponds to whether
the sensor was placed. In the lattice version \cite{Soma2015},
each sensor has a power level in
$\{0,\ldots,b\} \subseteq \nats$, to which the sensitivity of the sensor
is correlated. As a second example, consider
the influence maximization problem \citep{Kempe2003}; instead 
of the binary seeding of a user, the lattice version
enables partial incentives or discounts to be used \citep{Demaine2014}.

Although many results from the optimization of submodular set functions have
been generalized to the integer lattice \citep{Soma2015,Soma2015a,Ene2016},
many objective functions arising from applications are not submodular
\citep{Bian2017,Lin2017,Das2011,Horel2016}. In this work, we consider maximization subject to
a cardinality constraint (MCC), where the function $f$ to be maximized may be non-submodular.
Let $k \in \nats$ (the budget), $\vec b \in (\nats \cup \{ \infty \})^S$ (the box), and
let $f: \{ \vec x \in \lattice : \vec x \le \vec b \} \to \reals$ (the objective) be a non-negative
and monotonic\footnote{for all $\vec v \le \vec w$ (coordinate-wise), $f( \vec v) \le f( \vec w)$}
function with $f(\vec 0) = 0$. Then determine
\begin{equation} \label{prob:max} \tag{MCC}
\max_{ \lone{ \vec w} \le k } f( \vec w ),
\end{equation}
where $\vec w = (\vec w_s)_{s \in S} \in \lattice$, $\lone{ \vec w } = \sum_{s \in S} |\vec w_s|$. 

Since the integer lattice may be represented as a multiset of size $k|S|$,
one may use results for Problem \ref{prob:max} with non-submodular set functions.
In particular,
the tight ratio $\frac{1}{\alpha}\left( 1 - e^{-\alpha\gamma_{s}} \right)$ 
of the standard greedy algorithm 
by \citet{Bian2017}, where $\alpha, \gamma_s$ are discussed below, applies with the lattice
adaptation of the standard greedy algorithm (StandardGreedy) given in Alg. \ref{alg:standard-greedy}.  
However, this approach requires $\Omega (|S|k)$ queries of $f$,
which is not polynomial in the input\footnote{The input is considered to be the vector $\vec b$ of length $n = |S|$ and the number $k$ represented in $\log k$ bits (w.l.o.g. each component of $\vec b$ is at most $k$); the function is regarded as an oracle and hence does not contribute to input size.} size $O(|S| \log k)$. Even for applications with 
set functions, $\Omega (|S|k)$ queries may be prohibitive, and researchers 
\citep{Leskovec2007,Mirzasoleiman2014,Badanidiyuru2014} have sought
ways to speed up the StandardGreedy algorithm. Unfortunately, these
approaches rely upon the submodularity of $f$, and there has been no
analogous effort for non-submodular functions.

To quantify the non-submodularity of 
a lattice function $f$, we generalize the following quantities defined for set functions
to the lattice: 
(1) the diminishing-return (DR) ratio $\gamma_d$ of $f$ \citep{Lehmann2006},
(2) the submodularity ratio $\gamma_{s}$ of $f$ \citep{Das2011}, and
(3) the generalized curvature $\alpha$ of $f$ \citep{Bian2017}.
Our main contributions are:
\begin{itemize}
  \item 
    To speed up StandardGreedy (Alg. \ref{alg:standard-greedy}), 
    we adapt the threshold greedy framework
    of \citet{Badanidiyuru2014}
    to non-submodular functions; this 
    yields an algorithm (ThresholdGreedy, Alg. \ref{alg:thresh-greedy}) 
    with approximation ratio $(1 - e^{-\gamma_d\gamma_{s}} - \eta )$,
    for any $\eta > 0$, the first  
    approximation algorithm with polynomial query complexity for Problem \ref{prob:max}
    on the lattice. The query complexity of the StandardGreedy algorithm
    is improved from $\Omega( nk )$ to 
    $O \left( n \log k \log_{\kappa} \left( \epsi^2 /k \right) \right)$,
    where $\kappa, \epsi \in (0,1)$ are parameters of ThresholdGreedy.
  \item We introduce the novel approximation algorithm FastGreedy, which combines elements of
    StandardGreedy and ThresholdGreedy to improve the performance
    ratio to $(1 - e^{-\beta^*\gamma_{s}} - \eta )$, where  
    $\beta^*$ is at least $\gamma_d$ and in many cases\footnote{When the solution $\vec g$ returned by FastGreedy satisfies $\lone{\vec g}=k$. Otherwise, an upper bound on $\beta^*$ is returned.} is determined by the algorithm.
    Furthermore, FastGreedy exploits the non-submodularity of the function
    to decrease its runtime in practice without sacrificing
    its performance guarantee, while maintaining the same worst-case
    query complexity as ThresholdGreedy up to a constant factor.
  \item To demonstrate our algorithms, 
    we introduce a general budget allocation problem for
    viral marketing, which unifies 
    submodular \emph{influence maximization} (IM) 
    under the independent cascade model \citep{Kempe2003}
    with the
    non-submodular \emph{boosting problem} \citep{Lin2017} and
    in addition allows partial incentives.
    We prove a lower bound on the DR and submodularity ratios for this unified
    framework, and we experimentally validate
    our proposed algorithms in this setting.
    
\end{itemize}
\begin{algorithm}[tb]
   \caption{StandardGreedy}
   \label{alg:standard-greedy}
   \begin{algorithmic}[1]
     \STATE {\bfseries Input:} $f \in \funcs_{\vec b}$, $k \in \nats$, $\vec b \in \lattice$
     \STATE {\bfseries Output:} $\vec g \in \lattice$
     \STATE $\vec g \gets \vec 0$
     \FOR{$i = 1$ to $k$}
     \STATE $\vec g \gets \vec g + \argmax_{s \in S : \vec g + \vec s \le \vec b} \delta_{\vec s} ( \vec g )$
     \ENDFOR
     \STATE \textbf{return} $\vec g$
\end{algorithmic}
\end{algorithm}
\section{Related Work} \label{sect:rw}
The study of optimization of submodular set functions is too extensive to give 
a comprehensive overview. On the integer lattice,
there have been many efforts to maximize submodular functions,
\textit{e.g} \citet{Soma2017,Bian2017a,Gottschalk2015}.
To the best of our knowledge, we are the first to study
the optimization of non-submodular functions on the integer lattice. In the following
discussion, we primarily restrict our attention to the maximization of monotonic, submodular
lattice functions
subject to a cardinality constraint and the maximization of non-submodular set functions.
\paragraph{Reduction of \citet{Ene2016}.}
\citet{Ene2016} have given a polynomial-time
reduction from the lattice to a set that enables unified translation of submodular optimization
strategies to DR-submodular (\emph{i.e.} 
DR ratio $\gamma_d = 1$, see Section \ref{sect:dr-ratio})
functions on
the integer lattice. Since this translation is designed for DR-submodular
functions, it does not give a polynomial-time algorithm for Problem \ref{prob:max}
when $f$ is non-submodular. 
Specifically, for the case of maximization subject to a cardinality constraint, \citet{Ene2016}
rely upon the threshold greedy algorithm for submodular set functions \citep{Badanidiyuru2014},
which does not work for non-submodular functions without modifications such as the ones
in our paper. 
\paragraph{Threshold Greedy and Lattice Optimization.}
To speed up the StandardGreedy for submodular set functions, 
\citet{Badanidiyuru2014} introduced the threshold greedy framework, which
speeds up the StandardGreedy algorithm for maximizing
submodular set functions under cardinality constraint from
$O( n k )$ function evaluations to $O \left( \frac{n}{\epsi} \log \frac{n}{\epsi} \right)$,
and it maintains the approximation ratio $(1 - 1/e - \epsi)$, for $\epsi > 0$.
\citet{Soma2015a}
adapted the threshold approach for efficiently maximizing DR-submodular 
functions on the integer lattice and provided
$(1-1/e-\epsi)$-approximation algorithms. Other adaptations of the threshold
approach of \citet{Badanidiyuru2014} to
the integer lattice include \cite{Ene2016,Soma2015}. To the best of our knowledge, in this work we
make the first use of the threshold framework for non-submodular functions.

Our ThresholdGreedy algorithm is an adaptation of the algorithm
of \citet{Soma2015a} for DR-submodular maximization to non-submodular functions.
The non-submodularity requires new analysis, in the following specific ways:
(1) during the binary search phase, we cannot guarantee that we find 
the maximum number of copies whose average gain exceeds the threshold $\tau$;
hence, we must settle for any number of copies whose average gain exceeds $\tau$,
while ensuring that the gain of adding one additional copy falls belows $\tau$. 
(2) To prove the performance ratio, we require a combination of the DR 
ratio $\gamma_d$ 
and the submodularity ratio $\gamma_s$. 
(3) The stopping condition (smallest threshold value) 
is different resulting from the non-submodularity; the proof this condition
is sufficient requires another application of the DR ratio.

\paragraph{Optimization of Non-Submodular Set Functions.}
For non-submodular set functions,
the submodularity ratio $\gamma_s$ was introduced by \citet{Das2011};
we generalize $\gamma_s$ to lattice functions in Section \ref{sect:dr-ratio}, 
and we show the DR ratio $\gamma_d \le \gamma_s$.
\citet{Bian2017} introduced 
\emph{generalized curvature} $\alpha$ of a set function, an analogous concept
to the DR ratio as we discuss in Section \ref{sect:dr-ratio}.
\citet{Bian2017} extended the analysis of \citet{Conforti1984} to non-submodular set functions; together
with the submodularity ratio 
$\gamma_{s}$,
they proved StandardGreedy has tight approximation ratio $\frac{1}{\alpha}\left( 1 - e^{-\gamma_s \alpha} \right)$ under cardinality constraint.

The DR ratio $\gamma_d$ was introduced by \citet{Lehmann2006} 
for the valuation functions in the \emph{maximum allocation problem}
for a combinatorial auction; if each valuation function has DR ratio
at least $\gamma_d$, the maximum allocation problem is a special 
case of maximizing a set function $f$ with DR ratio $\gamma_d$ 
over a matroid, for which \citet{Lehmann2006} employ the StandardGreedy algorithm (for matroids). 

Many other notions of non-submodular set functions have been introduced \citep{Krause2008,Horel2016,Borodin2014,Feige2013}. For a comprehensive discussion of the relation of these and additional
notions to the submodularity ratio $\gamma_s$, we refer the reader to \citet{Bian2017}.
\section{Non-Submodularity on the Lattice} \label{sect:dr-ratio}
In this section, we define the lattice versions of
DR ratio $\gamma_d$, submodularity ratio $\gamma_s$, and
generalized curvature $\alpha$, which are used
in the approximation ratios proved in Section \ref{sect:algs}.
\paragraph{Notations.}
For each $s \in S$, let $\vec s$ be the unit vector with $1$ in
the coordinate corresponding to $s$, and 0 elsewhere.
We write $\delta_{ \vec w } ( \vec v ) = f( \vec v + \vec w ) - f( \vec v)$ for $\vec v, \vec w \in \lattice$.
Given a box in the integer lattice $\vec b \in \lattice$,
let the set of all non-negative, monotonic lattice functions with $f( \vec 0 ) = 0$, and
domain $\{\vec x \in \lattice : \vec x \leq \vec b\}$ be denoted
$\funcs_{\vec b}$.
It is often useful to think of a vector $\vec v \in \lattice$ as
a multi-set containing $\vec v_s$ copies of $s \in S$, where
$\vec v_s$ is the value of $\vec v$'s coordinate corresponding to $s$.
We use the notation $\{\vec v\}$ to represent the multiset corresponding to
the vector $\vec v$.
Finally, we define $\vec v \lor \vec w$ and $\vec v \land \vec w$ for
$\vec v, \vec w \in \lattice$ to be the vector with the coordinate-wise
maximum and minimum respectively.
Rather than an algorithm taking an explicit description of
the function $f$ as input, we consider the function $f$ as an
oracle and measure the complexity of an algorithm in terms
of the number of oracle calls or queries.

We begin with the related concepts of DR ratio and generalized curvature.
\begin{deff}
  Let $f \in \funcs_{\vec b}$.
  The \textit{diminishing-return (DR) ratio of $f$},
  $\gamma_d(f)$, is the maximum value in $[0,1]$ such that
  for any $s \in S$, and for all $\vec v \le \vec w$ such that
  $\vec w + \vec s \leq \vec b$,
  \[ \gamma_d (f) \delta_{\vec s} ( \vec w ) \le \delta_{\vec s} ( \vec v ).\]
\end{deff}
\begin{deff}
  Let $f \in \funcs_{\vec b}$. The \textit{generalized curvature of $f$},
  $\alpha(f)$,
  is the minimum value in $[0,1]$ such that
  for any $s \in S$, and for all $\vec v \le \vec w$ such that
  $\vec w + \vec s \leq \vec b$,
  \[ \delta_{\vec s} ( \vec w ) \ge (1 - \alpha (f)) \delta_{\vec s} ( \vec v ).\]
\end{deff}
The DR ratio extends the notion of DR-submodularity of \citet{Soma2015},
which is obtained as the special case $\gamma_d = 1$.
Generalized curvature for set functions was introduced in
\citet{Bian2017}. Notice that $\alpha$ results in lower bounds
on the marginal gain of $\vec s$ to a vector $\vec w$,
while $\gamma_d$ results in upper bounds on the same quantity:
\[ (1 - \alpha) \delta_{\vec s} ( \vec v ) \le \delta_{\vec s}( \vec w)
  \le \frac{1}{\gamma_d} \delta_{\vec s}(\vec v), \]
whenever $\vec v \le \vec w$ and the above expressions are defined.
Next, we generalize the submodularity ratio of \citet{Das2011} to the
integer lattice.
\begin{deff}
  Let $f \in \funcs_{\vec b}$.
  The \textit{submodularity ratio of $f$},
  $\gamma_s (f)$, is the maximum value in $[0,1]$ such that
  for all $\vec v, \vec w$, such that $\vec v \le \vec w$, 
  \[ \gamma_s(f)( f(\vec w) - f( \vec v ) )
    \le \sum_{s \in \{ \vec w - \vec v \} } \delta_{\vec s} ( \vec v ). \]
\end{deff}
The next proposition, proved in 
Appendix \ref{apx:dr-ratio}, shows the relationship
between DR ratio and submodularity ratio.
\begin{proposition} \label{prop:sm-ratio}
  For all $f \in \funcs_{\vec b}$,
  $\gamma_d(f) \le \gamma_s(f)$.
\end{proposition}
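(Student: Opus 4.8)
The plan is to show directly that $\gamma_d(f)$ itself satisfies the inequality defining the submodularity ratio; since $\gamma_s(f)$ is by definition the \emph{maximum} constant in $[0,1]$ for which that inequality holds for all $\vec v \le \vec w$, establishing this immediately gives $\gamma_d(f) \le \gamma_s(f)$. Concretely, I fix an arbitrary pair $\vec v \le \vec w$ in the domain and aim to prove
\[ \gamma_d(f)\bigl(f(\vec w) - f(\vec v)\bigr) \le \sum_{s \in \{\vec w - \vec v\}} \delta_{\vec s}(\vec v). \]

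First I would rewrite the difference $f(\vec w) - f(\vec v)$ as a telescoping sum of single-unit marginal gains. Enumerate the multiset $\{\vec w - \vec v\}$ as units $\vec s_1, \ldots, \vec s_m$ with $m = \lone{\vec w - \vec v}$ in any fixed order, and build the chain $\vec v = \vec u_0 \le \vec u_1 \le \cdots \le \vec u_m = \vec w$ by setting $\vec u_i = \vec u_{i-1} + \vec s_i$. Telescoping then gives
\[ f(\vec w) - f(\vec v) = \sum_{i=1}^m \delta_{\vec s_i}(\vec u_{i-1}). \]

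The key step is to apply the DR ratio termwise. For each $i$ we have $\vec v \le \vec u_{i-1}$, and since $\vec u_{i-1} + \vec s_i = \vec u_i \le \vec w \le \vec b$, the box constraint in the DR-ratio definition is satisfied; hence $\gamma_d(f)\,\delta_{\vec s_i}(\vec u_{i-1}) \le \delta_{\vec s_i}(\vec v)$. Summing over $i$ and substituting the telescoping identity yields
\[ \gamma_d(f)\bigl(f(\vec w) - f(\vec v)\bigr) = \gamma_d(f)\sum_{i=1}^m \delta_{\vec s_i}(\vec u_{i-1}) \le \sum_{i=1}^m \delta_{\vec s_i}(\vec v) = \sum_{s \in \{\vec w - \vec v\}} \delta_{\vec s}(\vec v), \]
which is exactly the required inequality.

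I do not anticipate a serious obstacle: the argument is a telescoping decomposition combined with one application of the DR ratio per term. The only point deserving care is checking the box constraint $\vec u_i \le \vec b$ at every step so that each invocation of the DR-ratio inequality is legitimate, which holds because every intermediate vector $\vec u_i$ lies below $\vec w$ and $\vec w$ is in the domain. It is also worth remarking that the resulting bound is independent of the chosen ordering of the units $\vec s_i$, so no additional justification of the enumeration is needed.
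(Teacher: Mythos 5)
Your proposal is correct and matches the paper's own proof essentially step for step: the same telescoping decomposition of $f(\vec w) - f(\vec v)$ along a chain from $\vec v$ to $\vec w$, the same termwise application of the DR-ratio inequality to bound each $\delta_{\vec s_i}(\vec u_{i-1})$ by $\frac{1}{\gamma_d}\delta_{\vec s_i}(\vec v)$, and the same appeal to the maximality in the definition of $\gamma_s$. Your explicit verification of the box constraint $\vec u_i \le \vec w \le \vec b$ at each step is a point the paper leaves implicit, but it does not change the argument.
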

In the rest of this work, we will parameterize functions by the non-submodularity
ratios defined above and partition functions into the sets 
$\funcs_{\vec b}^{\gamma_d,\gamma_s,\alpha} = \{ f \in \funcs_{\vec b} : \gamma_d( f ) = \gamma_d, \gamma_s( f ) = \gamma_s, \alpha(f) = \alpha\}$.
\paragraph{Greedy versions.}
In the proofs of this paper, the full power
of the parameters defined above is not required.
It suffices to consider restricted versions,
where the maximization is taken over only those vectors
which appear in the ratio proofs. 
We define these greedy versions in Appendix \ref{apx:nonsm-defs}
and include more discussion in Remark \ref{rem:greedy-versions} of Section
\ref{sect:threshold-greedy}.
\section{Algorithms}\label{sect:algs}
\subsection{The ThresholdGreedy Algorithm} \label{sect:threshold-greedy}
In this section, we present the algorithm ThresholdGreedy (Alg. \ref{alg:thresh-greedy})
to approximate Problem \ref{prob:max} with ratio $1 - e^{-\gamma_g\gamma_s} - \eta$
with polynomial query complexity. Appendix \ref{apx:bspivot} contains the proofs of all lemmas, claims, 
and omitted details from this section.
\paragraph{Description.} 
ThresholdGreedy operates by considering decreasing thresholds for the marginal gain
in its outer \textbf{for} loop; for each threshold $\tau$, the algorithm
adds on line \ref{line:update-g} elements whose marginal gain exceeds $\tau$ as described below.
The parameter $\kappa \in (0,1)$ determines the stepsize between successive thresholds; the
algorithm continues until the budget $k$ is met (line \ref{line:budget-check}) or the threshold
is below a minimum value dependent on the parameter $\epsi \in (0,1)$.
 
Intuitively, the goal of the threshold approach 
\citep{Badanidiyuru2014} for submodular set functions is as follows.
At each threshold (\textit{i.e.}, iteration of the
outer \textbf{for} loop), add all elements 
whose marginal gain exceeds $\tau$ to the solution $\vec g$. 
On the lattice, adding all copies 
of $s \in S$ whose average gain exceeds $\tau$ on line \ref{line:update-g}
would require the addition of the maximum multiple $l\vec s$ such
that the average marginal gain exceeds $\tau$: 
\begin{equation} \label{prop:1} \tag{P1}
\delta_{l\vec s} (\vec g) \ge l \tau,
\end{equation}
as in the threshold algorithm of \citet{Soma2015a} for DR-submodular functions,
in which the maximum $l$ is identified by binary search.
However, since $f$ is not DR-submodular, it is not always the case that
$\delta_{\vec s}( \vec g + l \vec s) \ge \delta_{\vec s}( \vec g + (l + 1) \vec s )$,
for each $l$. 
For this reason, we cannot find the maximum such $l$ by binary search. 
Furthermore, even if we found the maximum $l$ for each $s \in S$,
we could not guarantee that all elements of marginal gain at least $\tau$
were added due to the non-submodularity of $f$: an element whose gain is
less than $\tau$ when considered in the inner \textbf{for} loop might have
gain greater than $\tau$ after additional elements are added to the solution.

ThresholdGreedy 
more conservatively ensures that the number $l$
chosen for each $s \in S$ satisfies both (\ref{prop:1}) and
\begin{equation} \tag{P2}
  \delta_{\vec s} ( \vec g + l\vec s ) < \tau, \label{prop:2}
\end{equation}
but it is not necessarily the maximum such $l$.
\paragraph{Pivot.}
Any $l$ satisfying both (\ref{prop:1}) and (\ref{prop:2}) is termed
a \emph{pivot}\footnote{For convenience, we also define the maximum value of $l$,
$l_{max} = \min \{ \vec b_s - \vec g_s, k - \lone{\vec g} \}$
to be a pivot if $l_{max}$ satisfies (\ref{prop:1}) only,
and set $\delta_{\vec s}(\vec g + l_{max} \vec s) = 0$, so
that all pivots satisfy both properties.} 
with respect to $\vec g, s, \tau$.
Perhaps surprisingly, a valid pivot can be found 
with binary search in $O(\log \vec b_{max} ) = O(\log k)$ function queries, where
$\vec b_{max} = \max_{s \in S} \vec b_{s}$; 
discussion of BinarySearchPivot and proof of this results is provided in Appendix \ref{apx:bspivot},
Lemma \ref{lemm:pivot}.
By finding a pivot for each $s \in S$, 
ThresholdGreedy does not attempt to add all elements exceeding the marginal
gain of threshold $\tau$; instead, 
ThresholdGreedy maintains the following property 
at each threshold.
\begin{prop} \label{lemm:exist}
  Let $\vec g_{\tau}$ be the solution
  of ThresholdGreedy immediately after the iteration
  of the outer \textbf{for} loop corresponding to
  threshold $\tau$. Then for each $s \in S$, 
  there exists $\vec h \le \vec g_{\tau}$ such that
  $\delta_{\vec s} ( \vec h ) < \tau$.
\end{prop}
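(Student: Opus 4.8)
The plan is to exhibit, for each $s \in S$, an explicit intermediate value of the solution vector $\vec g$ arising during the inner \textbf{for} loop of the iteration for $\tau$, and to check that this value is both dominated by $\vec g_\tau$ and witnesses the claimed inequality. Fix $s \in S$ and look at the step of this inner loop that processes $s$. Let $\vec g'$ denote the value of $\vec g$ just before the update on line \ref{line:update-g} for this $s$, and let $l$ be the pivot returned by BinarySearchPivot on input $\vec g', s, \tau$ (valid by Lemma \ref{lemm:pivot}). After the update, $\vec g = \vec g' + l\vec s$; I would set $\vec h := \vec g' + l\vec s$ and argue this is the desired witness.

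The inequality $\delta_{\vec s}(\vec h) < \tau$ is then essentially immediate from property (\ref{prop:2}) of a pivot, since $\delta_{\vec s}(\vec h) = \delta_{\vec s}(\vec g' + l\vec s) < \tau$. To see $\vec h \le \vec g_\tau$, note that every update on line \ref{line:update-g} adds a non-negative multiple of a unit vector, so the value of $\vec g$ is non-decreasing coordinate-wise as the inner loop proceeds; since $\vec h$ is the value of $\vec g$ immediately after $s$ is processed and $\vec g_\tau$ is its value once the iteration for $\tau$ completes, monotonicity gives $\vec h \le \vec g_\tau$. Because $s \in S$ was arbitrary, this establishes the proposition.

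The entire argument is bookkeeping once the pivot guarantee (\ref{prop:2}) is in hand, so I expect the only delicate point to be the boundary case. When $l = l_{max}$ is forced by the box or budget constraint, $l_{max}$ satisfies (\ref{prop:1}) but not (\ref{prop:2}), and the witness inequality holds only through the convention $\delta_{\vec s}(\vec g' + l_{max}\vec s) = 0$ introduced in the footnote, rather than as a genuine inequality about $f$; making the statement precise therefore amounts to reading $\delta_{\vec s}(\vec h) < \tau$ under this convention for pivots that attain $l_{max}$.

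The remaining thing to confirm is structural rather than analytic: that the inner loop produces a pivot, and hence a witness $\vec h$, for every $s \in S$ within the iteration for $\tau$. This holds provided the budget-met and minimum-threshold termination tests govern the outer \textbf{for} loop rather than interrupting the inner one; I would verify from the pseudocode of Algorithm \ref{alg:thresh-greedy} that line \ref{line:budget-check} is positioned so that the inner pass over $S$ always completes before $\vec g_\tau$ is recorded, which is what lets the per-$s$ witness above be assembled into the universally quantified statement.
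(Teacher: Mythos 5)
Your proof is correct and is essentially the paper's own argument: the paper's proof simply takes $\vec h = \vec g^{\tau,s}$, the value of $\vec g$ immediately after $s$ is considered in the iteration for $\tau$, and cites property (\ref{prop:2}) of the pivot; the coordinate-wise monotonicity giving $\vec h \le \vec g_\tau$ and the $l_{max}$ convention from the footnote are left implicit, and you correctly fill in both.

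One caveat about your final paragraph: the verification you propose would come out the opposite of what you expect. In Algorithm \ref{alg:thresh-greedy}, line \ref{line:budget-check} sits \emph{inside} the inner \textbf{for} loop, so the budget test can interrupt the pass over $S$ mid-iteration. This does not hurt the proposition in the way it is actually used, however: if the return triggers, the algorithm terminates immediately with $\lone{\vec g} = k$, so the iteration for $\tau$ never completes and no $\vec g_\tau$ for that threshold is ever appealed to. In the paper's analysis, Property \ref{lemm:exist} is invoked only for thresholds whose iterations ran to completion (the previous threshold $\tau' = \tau/\kappa$ in Claim \ref{lemm:marge}, and the final threshold of a run that stopped because $\tau$ fell below the minimum in Claim \ref{lemm:wlog}). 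So the correct resolution is not that the inner pass always completes, but that the universally quantified statement is only needed, and only asserted, for completed iterations; your per-$s$ witness argument covers exactly those.
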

\begin{algorithm}[tb]
   \caption{ThresholdGreedy}
   \label{alg:thresh-greedy}
\begin{algorithmic}[1]
  \STATE {\bfseries Input:} $f \in \funcs_{\vec b}$, $k \in \nats$, $\kappa, \epsi \in (0,1)$.
  \STATE {\bfseries Output:} $\vec g \in \lattice$
  \STATE $\vec g \gets \vec 0$, $M \gets \max_{s \in S} f( \vec s )$.
  \FOR{$\left( \tau = M; \tau \ge \frac{\kappa \epsi^2 M }{k}; \tau \gets \kappa \tau \right)$\label{fg-outer-for}}
  \FOR{$s \in S$}
  \STATE $l \gets $BinarySearchPivot$( f, \vec g, \vec b, s, k, \tau )$
  \STATE $\vec g \gets \vec g + l \vec s$\label{line:update-g}
  \IF{$\lone{\vec g} = k$\label{line:budget-check}}
  \STATE \textbf{return} $\vec g$
  \ENDIF
  \ENDFOR
  \ENDFOR
  \STATE \textbf{return} $\vec g$
\end{algorithmic}
\end{algorithm}
\paragraph{Performance ratios.} Next, we present the main result of this
section, the performance guarantee involving the DR and submodularity ratios.
 Observe that the query complexity of ThresholdGreedy is polynomial
in the input size $O( n \log k )$. 
\begin{theorem} \label{thm:threshold}
  Let an instance of Problem \ref{prob:max} be given,
  with $f \in \funcs_{\vec b}^{\gamma_d,\gamma_s,\alpha}$. 
  If $\vec g$ is the solution returned by ThresholdGreedy
  and $\vec \Omega$ is
  an optimal solution to this instance, then
  \begin{align*} 
    f( \vec g ) &\ge \left(1 - e^{- \kappa \gamma_d \gamma_s }- \epsi \right) f( \vec \Omega ).
  \end{align*}
  The query complexity of ThresholdGreedy is
  $O \left( n \log k \log_{\kappa} \left( \epsi^2 /k \right) \right)$.
\end{theorem}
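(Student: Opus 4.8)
The plan is to prove the two assertions — the ratio and the query bound — separately, with the bulk of the work going into the ratio. The engine of the ratio argument is a uniform upper bound on the marginal gains still available while a given threshold is being processed. Fix a threshold $\tau$ of the outer loop and let $\vec g$ be the current solution at any moment during that iteration; I would first show that for every $s \in S$,
\[ \delta_{\vec s}(\vec g) \le \frac{\tau}{\kappa \gamma_d}. \]
For the first threshold $\tau = M$ this follows from $M = \max_s f(\vec s) = \max_s \delta_{\vec s}(\vec 0)$ together with the DR ratio applied to $\vec 0 \le \vec g$, namely $\gamma_d \delta_{\vec s}(\vec g) \le \delta_{\vec s}(\vec 0) \le M$. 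For every later threshold $\tau$, Property \ref{lemm:exist} at the previous threshold $\tau/\kappa$ supplies, for each $s$, a witness $\vec h \le \vec g$ with $\delta_{\vec s}(\vec h) < \tau/\kappa$, and the DR ratio applied to $\vec h \le \vec g$ yields the display. This is exactly where the non-submodularity is absorbed: gains need not be monotone in $\vec g$, but $\gamma_d$ lets me transfer the small-gain witness guaranteed by Property \ref{lemm:exist} up to the current, larger $\vec g$.

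Next I would convert this into a bound on the optimality gap. By monotonicity $f(\vec \Omega) \le f(\vec g \lor \vec \Omega)$, and the submodularity ratio applied to $\vec g \le \vec g \lor \vec \Omega$ gives
\[ \gamma_s\bigl(f(\vec \Omega) - f(\vec g)\bigr) \le \sum_{s \in \{(\vec g \lor \vec \Omega) - \vec g\}} \delta_{\vec s}(\vec g) \le \frac{k\tau}{\kappa \gamma_d}, \]
since the multiset $\{(\vec g \lor \vec \Omega) - \vec g\}$ has at most $\lone{\vec \Omega} \le k$ elements, each with gain bounded by the first step. Rearranged, $\tau \ge \tfrac{\kappa \gamma_d \gamma_s}{k}\bigl(f(\vec \Omega) - f(\vec g)\bigr)$ holds throughout the processing of $\tau$. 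I would then track the additions in order: when $l\vec s$ is added at threshold $\tau$, Property (\ref{prop:1}) guarantees a gain of at least $l\tau$, and substituting the lower bound on $\tau$ gives, with $\Delta = f(\vec \Omega) - f(\vec g)$ and $c = \kappa \gamma_d \gamma_s / k$, a one-step contraction $\Delta' \le (1 - c\,l)\,\Delta$. Telescoping over all additions and using $1 - x \le e^{-x}$ yields $f(\vec \Omega) - f(\vec g) \le e^{-c \lone{\vec g}} f(\vec \Omega)$.

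It remains to split on how the algorithm halts, and this is where I expect the only real subtlety. If it returns because $\lone{\vec g} = k$ (line \ref{line:budget-check}), then $c\lone{\vec g} = \kappa \gamma_d \gamma_s$ and the telescoped bound already gives $f(\vec g) \ge (1 - e^{-\kappa \gamma_d \gamma_s}) f(\vec \Omega)$, stronger than claimed. If instead the outer loop exhausts all thresholds, then $\lone{\vec g} < k$ and the final threshold satisfies $\tau < \epsi^2 M / k$; feeding this into the gap estimate of the previous paragraph (using Property \ref{lemm:exist} at the final threshold) gives $f(\vec \Omega) - f(\vec g) \le \epsi^2 M / (\gamma_d \gamma_s)$. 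The main obstacle is closing this second case, since a naive reading leaves the harmful factor $1/(\gamma_d\gamma_s)$. The resolution I would use: the theorem is vacuous unless $1 - e^{-\kappa \gamma_d \gamma_s} - \epsi > 0$, which forces $\epsi < 1 - e^{-\kappa \gamma_d \gamma_s} \le \kappa \gamma_d \gamma_s \le \gamma_d \gamma_s$; combined with $M \le f(\vec \Omega)$ (valid because each single copy $\vec s$ is feasible, so $f(\vec s) \le f(\vec \Omega)$), this converts $\epsi^2 M/(\gamma_d \gamma_s)$ into at most $\epsi\, f(\vec \Omega)$, which completes the ratio.

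Finally, the query complexity is a counting exercise: the geometric threshold schedule $\tau = M, \kappa M, \kappa^2 M, \dots$ down to $\kappa \epsi^2 M / k$ has $O\bigl(\log_{\kappa}(\epsi^2/k)\bigr)$ iterations of the outer loop, the inner loop runs $n = |S|$ times, and each call to BinarySearchPivot costs $O(\log k)$ queries by Lemma \ref{lemm:pivot}, for a total of $O\bigl(n \log k \, \log_{\kappa}(\epsi^2/k)\bigr)$ queries.
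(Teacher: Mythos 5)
Your proof is correct, and its engine coincides with the paper's: the uniform bound $\delta_{\vec s}(\vec g) \le \tau/(\kappa\gamma_d)$, obtained from Property \ref{lemm:exist} at the previous threshold together with the DR ratio (with the first threshold handled via $M$), is exactly the paper's Claim \ref{lemm:marge}; and the conversion into a per-addition contraction of the gap via the submodularity ratio, the lattice identity $\vec v \vee \vec w - \vec v = \vec w - \vec v \wedge \vec w$, property (\ref{prop:1}) of pivots, and $1-x \le e^{-x}$ is the paper's main chain of inequalities. Where you genuinely diverge is the endgame, i.e., the treatment of the run that exhausts its thresholds before exhausting the budget. The paper handles this upfront with Claim \ref{lemm:wlog}: it compares $\vec g$ against a hypothetical run continued until $\lone{\vec g}=k$, bounds the difference by $\epsi f(\vec \Omega)$ using only the DR ratio and the last-threshold witnesses (under the reduction $\gamma_d \ge \epsi$, the complementary case being trivial), and then proves the clean $\left(1-e^{-\kappa\gamma_d\gamma_s}\right)$ bound for the completed run. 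You instead case-split at termination: when $\lone{\vec g}=k$ the telescoped bound finishes with no $\epsi$ loss at all, and otherwise you bound the residual gap directly, $f(\vec \Omega) - f(\vec g) \le \epsi^2 M/(\gamma_d\gamma_s) \le \epsi f(\vec \Omega)$, using Property \ref{lemm:exist} at the final threshold, the submodularity ratio, the vacuity observation $\epsi < 1 - e^{-\kappa\gamma_d\gamma_s} \le \kappa\gamma_d\gamma_s$, and $M \le f(\vec \Omega)$. This avoids the hypothetical extended run entirely and in fact yields the stronger bound $(1-\epsi)f(\vec \Omega)$ in that case, at the price of invoking $\gamma_s$ in the terminal estimate where the paper needs only $\gamma_d$; the paper's route isolates the $\epsi$ loss once and keeps the remaining analysis uniform. (Both arguments share the same implicit step that every factor $1 - l^t\kappa\gamma_d\gamma_s/k$ is nonnegative, which holds since $l^t \le k$ and $\kappa,\gamma_d,\gamma_s \le 1$, so the telescoping is legitimate.) Your query-complexity count is identical to the paper's.
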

If $\eta > 0$ is given, the assignment $\kappa = (1 - \eta / 2)$, $\epsi = \eta / 2$
yields performance ratio at least $1 - e^{-\gamma_d \gamma_s} - \eta$.
\begin{proof}
  If $\gamma_d < \epsi$, the ratio holds trivially; 
  so assume $\gamma_d \ge \epsi$.
  The proof of the following claim requires an application of 
  the DR ratio.
  \begin{claim} \label{lemm:wlog}
    Let $\vec g$ be produced by a modified version of ThresholdGreedy
    that continues until $\lone{ \vec g} = k$.
    If we show 
    $f( \vec g ) \ge (1 - e^{-\kappa \gamma_d \gamma_s})f( \vec \Omega )$, the
    results follows.
  \end{claim}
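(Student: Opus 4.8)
The plan is to case-split on \emph{why} ThresholdGreedy halts and to argue that the only gap between the real algorithm and the idealized ``run-to-budget'' version costs at most an additive $\epsi f(\vec\Omega)$. The algorithm returns for exactly one of two reasons: either the budget is exhausted ($\lone{\vec g} = k$ on line \ref{line:budget-check}), or the outer \textbf{for} loop ends because the next threshold would fall below the floor $\frac{\kappa\epsi^2 M}{k}$. I would dispatch the first reason immediately and concentrate the work on the second.

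In the budget-exhausted case the modified algorithm coincides with the real one: since the budget was met at a threshold still at or above $\frac{\kappa\epsi^2 M}{k}$, deleting the lower bound on $\tau$ changes nothing prior to termination, so both runs produce the same $\vec g$. The hypothesized bound $f(\vec g)\ge(1-e^{-\kappa\gamma_d\gamma_s})f(\vec\Omega)$ then applies verbatim and is already stronger than what the theorem requires.

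The substantive case is when $\lone{\vec g}<k$, so the final executed threshold $\tau_f$ satisfies $\tau_f<\frac{\epsi^2 M}{k}$. Here I would prove the stronger inequality $f(\vec g)\ge(1-\epsi)f(\vec\Omega)$, which suffices because $e^{-\kappa\gamma_d\gamma_s}\ge 0$. By Property \ref{lemm:exist} at $\tau_f$, for each $s\in S$ there is $\vec h_s\le\vec g$ with $\delta_{\vec s}(\vec h_s)<\tau_f$. I would then bound the optimum through $\vec g\lor\vec\Omega$: by monotonicity $f(\vec\Omega)\le f(\vec g\lor\vec\Omega)$, and I would expand $f(\vec g\lor\vec\Omega)-f(\vec g)$ as a telescoping sum of at most $\lone{\vec\Omega}\le k$ single-copy marginal gains $\delta_{\vec s}(\vec y)$ along a chain from $\vec g$ up to $\vec g\lor\vec\Omega$. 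Each intermediate point obeys $\vec h_s\le\vec g\le\vec y$ and $\vec y+\vec s\le\vec g\lor\vec\Omega\le\vec b$, so the DR ratio gives $\delta_{\vec s}(\vec y)\le\delta_{\vec s}(\vec h_s)/\gamma_d<\tau_f/\gamma_d$. Summing and using $\tau_f<\frac{\epsi^2 M}{k}$, the standing assumption $\gamma_d\ge\epsi$, and $M=\max_{s}f(\vec s)\le f(\vec\Omega)$ (every $\vec s$ is feasible), I obtain $f(\vec g\lor\vec\Omega)-f(\vec g)<k\tau_f/\gamma_d<\epsi^2 M/\gamma_d\le\epsi M\le\epsi f(\vec\Omega)$, hence $f(\vec g)>(1-\epsi)f(\vec\Omega)$.

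The main obstacle I anticipate is this last case: one must convert the per-coordinate witnesses of Property \ref{lemm:exist}, each at a possibly different $\vec h_s\le\vec g$, into a bound on the marginal gains evaluated along the chain up to $\vec g\lor\vec\Omega$, and it is precisely here that the DR ratio $\gamma_d$ is essential --- telescoping with only the submodularity ratio would leave a spurious $1/\gamma_s$ factor and fail to yield the clean $(1-\epsi)$ loss. Verifying that the modified algorithm truly agrees with the original in the budget-exhausted case and that $M\le f(\vec\Omega)$ are the remaining, routine, points.
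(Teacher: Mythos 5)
Your proof is correct, and in the substantive case it takes a genuinely different route from the paper's. The paper never compares $\vec g$ to $\vec g \lor \vec \Omega$: it lets $\vec g'$ be the output of the continued run, orders $\{\vec g'\}\setminus\{\vec g\} = \{\vec s_1,\ldots,\vec s_\ell\}$, and telescopes $f(\vec g') - f(\vec g) = \sum_{i}\delta_{\vec s_i}(\vec g'_{i-1})$, bounding each term by $\delta_{\vec s_i}(\vec g_{i-1})/\gamma_d$ via the Property \ref{lemm:exist} witnesses and the DR ratio, so that $f(\vec g') - f(\vec g) \le \epsi M \le \epsi f(\vec\Omega)$; this yields the generic transfer ``if $f(\vec g') \ge \Phi f(\vec\Omega)$ then $f(\vec g) \ge (\Phi-\epsi)f(\vec\Omega)$,'' applied with $\Phi = 1 - e^{-\kappa\gamma_d\gamma_s}$. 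You instead telescope from $\vec g$ up to $\vec g\lor\vec\Omega$ and conclude, in the threshold-floor case, the stronger statement $f(\vec g) \ge (1-\epsi)f(\vec\Omega)$ outright, so the claim's hypothesis about the modified run is needed only in the budget-exhausted case, where the two runs trivially coincide. Both arguments use identical ingredients --- Property \ref{lemm:exist} at the final threshold $\tau_f$, the bound $\tau_f < \epsi^2 M/k$ (correct, since the loop exits when $\kappa\tau_f < \kappa\epsi^2 M/k$), the DR ratio together with $\gamma_d \ge \epsi$, feasibility of the chain below $\vec b$, at most $k$ telescoping terms, and $M \le f(\vec\Omega)$ --- and your bookkeeping checks out. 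What your variant buys is a sharper conclusion in the floor case: the returned solution is $(1-\epsi)$-optimal there, independently of $\gamma_s$ and of the continued run. What the paper's variant buys is uniformity: a single transfer lemma valid for any ratio $\Phi$, whose proof is reused essentially verbatim for FastGreedy in Claim \ref{claim:fg-wlog}.
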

  Thus, for the rest of the proof let $\vec g$ be as described in 
  Claim \ref{lemm:wlog}.
  Let $\vec g^{t}$ be the value of $\vec g$ after the $t$th execution of line
  \ref{line:update-g} of ThresholdGreedy.
  Let $l^t$ be the $t$th pivot, such that $\vec g^t = \vec g^{t - 1} + l^t\vec s^t$.
  The next claim lower bounds the marginal gain in terms of the DR ratio
  and the previous threshold.
  \begin{claim} \label{lemm:marge}
    For each $s \in \{ \vec \Omega - \vec g^{t-1} \wedge \vec \Omega \}$,
    $$l^t \gamma_d \kappa \delta_{\vec s} ( \vec g^{t - 1} ) \le f( \vec g^t ) - f( \vec g^{t - 1}).$$
  \end{claim}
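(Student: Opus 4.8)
The plan is to combine the pivot property (\ref{prop:1}) at the current threshold with an upper bound on $\delta_{\vec s}(\vec g^{t-1})$ inherited from the previous threshold through the DR ratio. Write $\tau_t$ for the value of the outer-loop threshold during the iteration in which the $t$th addition is made, and let $\vec s^t$ be the element chosen there, so $f(\vec g^t) - f(\vec g^{t-1}) = \delta_{l^t \vec s^t}(\vec g^{t-1})$. Applying (\ref{prop:1}) to the pivot $l^t$ gives
$$ f(\vec g^t) - f(\vec g^{t-1}) = \delta_{l^t \vec s^t}(\vec g^{t-1}) \ge l^t \tau_t, $$
so it is enough to establish the per-element inequality $\gamma_d \kappa\, \delta_{\vec s}(\vec g^{t-1}) \le \tau_t$ for each $s \in \{\vec\Omega - \vec g^{t-1}\wedge\vec\Omega\}$ and then multiply through by $l^t$.

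First I would check feasibility: membership $s \in \{\vec\Omega - \vec g^{t-1}\wedge\vec\Omega\}$ means $\vec g^{t-1}_s < \vec\Omega_s \le \vec b_s$, hence $\vec g^{t-1} + \vec s \le \vec b$, which is exactly the hypothesis needed for every DR-ratio application below. To bound $\delta_{\vec s}(\vec g^{t-1})$, let $\tau_{\mathrm{prev}} = \tau_t/\kappa$ be the threshold immediately preceding $\tau_t$ and $\vec g_{\tau_{\mathrm{prev}}}$ the solution after that iteration finished. Since the solution is monotone in the step index and $\vec g_{\tau_{\mathrm{prev}}}$ is the solution at the start of the $\tau_t$-iteration, $\vec g_{\tau_{\mathrm{prev}}} \le \vec g^{t-1}$. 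Proposition \ref{lemm:exist} at threshold $\tau_{\mathrm{prev}}$ then furnishes a witness $\vec h \le \vec g_{\tau_{\mathrm{prev}}} \le \vec g^{t-1}$ with $\delta_{\vec s}(\vec h) < \tau_{\mathrm{prev}}$, and the DR ratio with $\vec v = \vec h \le \vec w = \vec g^{t-1}$ transports this to $\gamma_d\, \delta_{\vec s}(\vec g^{t-1}) \le \delta_{\vec s}(\vec h) < \tau_t/\kappa$, giving $\gamma_d\kappa\,\delta_{\vec s}(\vec g^{t-1}) \le \tau_t$. The initial threshold $\tau_t = M$ has no predecessor and is handled directly: $\delta_{\vec s}(\vec 0) \le M$ and the DR ratio with $\vec v = \vec 0$ yield $\gamma_d\,\delta_{\vec s}(\vec g^{t-1}) \le M = \tau_t$, so the bound holds a fortiori since $\kappa < 1$.

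The hard part, and the reason Proposition \ref{lemm:exist} is invoked, is that non-submodularity forbids the naive claim $\delta_{\vec s}(\vec g^{t-1}) < \tau_{\mathrm{prev}}$: adding copies of other elements can raise the marginal gain of $s$ back above the previous threshold. The role of Proposition \ref{lemm:exist} is precisely to guarantee a single comparison point $\vec h$ below $\vec g^{t-1}$ at which the gain of $s$ was small, and the DR ratio then propagates that smallness upward to $\vec g^{t-1}$. Once the per-element inequality is in hand, chaining it with the (\ref{prop:1}) bound completes the claim; everything outside the $\vec h$-to-$\vec g^{t-1}$ transport is a routine inequality chase.
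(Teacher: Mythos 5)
Your proof is correct and follows essentially the same route as the paper's: invoke pivot property (\ref{prop:1}) to get $f(\vec g^t) - f(\vec g^{t-1}) \ge l^t \tau$, then bound $\gamma_d \kappa\, \delta_{\vec s}(\vec g^{t-1}) \le \tau$ by splitting into the first-threshold case (DR ratio against $\vec 0$ with $\delta_{\vec s}(\vec 0) \le M$) and the general case (Property \ref{lemm:exist} at the previous threshold $\tau/\kappa$ supplying the witness $\vec h \le \vec g^{t-1}$, transported upward by the DR ratio). Your explicit feasibility check that $\vec g^{t-1} + \vec s \le \vec b$ for $s \in \{\vec \Omega - \vec g^{t-1} \wedge \vec \Omega\}$ is a small detail the paper leaves implicit, but it does not change the argument.
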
 
    \begin{proof}
      Let $\tau$ be the threshold at which
    $l^t\vec s^t$ is added to $\vec g^{t-1}$;
    let $s \in \{ \vec \Omega - \vec g^{t - 1} \wedge \vec \Omega \}$.
    If $\tau$ is the first threshold, 
    $$\gamma_d \delta_{\vec s} ( \vec g^{t - 1} ) \le \delta_{\vec s}( \vec 0 ) \le \tau < \frac{\tau}{\kappa }.$$
    If $\tau$ is not the first threshold, $\tau' = \tau / \kappa$
    is the previous threshold value of the previous iteration of the
    outer \textbf{for} loop. 
    By Property \ref{lemm:exist},
    there exists $\vec h \le \vec g_{\tau'} \le \vec g^{t - 1}$, such that 
    $\delta_{\vec s}( \vec h ) < \tau'$.
    By the definition of DR ratio,
    $\gamma_d \delta_{\vec s} ( \vec g^{t-1} ) \le \delta_{\vec s}( \vec h ) < \tau' = \tau / \kappa$.
    
    In either case, by the fact that property (\ref{prop:1}) of a pivot
    holds for $l^t$, we have
    $$f( \vec g^t ) - f( \vec g^{t-1} ) \ge l^t\tau \ge l^t \gamma_d \kappa \delta_{\vec s}( \vec g^{t-1} ).\qedhere$$
  \end{proof}
  Since $| \vec \Omega | \le k$, we have by Claim \ref{lemm:marge}
  \begin{align*}
    f( \vec g^t ) - f( \vec g^{t-1} ) &\ge \frac{l^t \gamma_d \kappa}{k} \sum_{s \in \left\{ \vec \Omega - (\vec g^{t-1} \wedge \vec \Omega ) \right\}} \delta_{\vec s} ( \vec g^{t - 1} ) \\
                                      &= \frac{l^t \gamma_d \kappa}{k} \sum_{s \in \left\{ \vec \Omega \vee \vec g^{t-1} - \vec g^{t - 1} \right\}} \delta_{\vec s} ( \vec g^{t - 1} ) \\
                                      &\ge \frac{ l^t \gamma_d \gamma_s \kappa }{k} \left( f ( \vec \Omega \vee \vec g^{t-1}) - f( \vec g^{t-1} ) \right)\\
    &\ge \frac{ l^t \gamma_d \gamma_s \kappa }{k} \left( f ( \vec \Omega ) - f( \vec g^{t-1} ) \right),
  \end{align*}
  where the equality follows from
  the lattice identity $\vec v \vee \vec w - \vec v = \vec w - \vec v \wedge \vec w$
  for all $\vec v, \vec w \in \lattice$, the
  second inequality is by definition of the submodularity ratio, and the third
  inequality is from monotonicity. 
  From here, we obtain 
  $f( \vec g ) \ge \left( 1 - \prod_{t=1}^T \left( 1 - \frac{l^t \gamma_d \gamma_s \kappa}{k} \right) \right) f( \vec \Omega ),$
  from which the hypothesis of Claim \ref{lemm:wlog} follows.
  \paragraph{Query complexity.} The \textbf{for} loop on line \ref{fg-outer-for}
  (Alg. \ref{alg:thresh-greedy}) 
  iterates at most $\log_{\kappa} \epsi^2 / k$ times;
  each iteration requires $O( n \log k )$ queries, by 
  Lemma \ref{lemm:pivot}.
\end{proof}
For additional speedup, the inner
\textbf{for} loop of FastGreedy may be parallelized,
which divides the factor of $n$ in the query complexity by the number of threads but worsens
the performance ratio; in addition to $\gamma_d,\gamma_s$, 
the generalized curvature $\alpha$ is required in the proof.
\begin{corollary} \label{cor:thresh-par}
  If the inner \textbf{for} loop of ThresholdGreedy is parallelized, 
  the performance ratio becomes $1 - e^{-(1 - \alpha)\gamma_d\gamma_s} - \eta$,
  for $\eta > 0$.
\end{corollary}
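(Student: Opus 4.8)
The plan is to mirror the proof of Theorem \ref{thm:threshold} almost verbatim, inserting a single extra factor of $(1-\alpha)$ that accounts for the discrepancy between the pivots -- which, once the inner loop is parallelized, are all computed against the solution at the \emph{start} of a threshold iteration -- and the gains that are actually realized as those pivots are applied one after another. First I would fix notation for the parallel execution: for threshold $\tau$ let $\vec g_{\tau'}$ (with $\tau' = \tau/\kappa$) be the solution at the start of the iteration, against which BinarySearchPivot computes every pivot $l_s$, and let $\vec g_\tau = \vec g_{\tau'} + \sum_{s} l_s \vec s$ be the solution after the whole batch is applied. Because each $l_s$ still satisfies property (\ref{prop:2}) with respect to $\vec g_{\tau'}$, taking $\vec h = \vec g_{\tau'} + l_s\vec s \le \vec g_\tau$ shows Property \ref{lemm:exist} holds unchanged; consequently the DR-ratio estimate $\gamma_d \delta_{\vec s}(\vec g_{\tau'}) \le \tau/\kappa$ used in Claim \ref{lemm:marge}, equivalently $\tau \ge \kappa\gamma_d\delta_{\vec s}(\vec g_{\tau'})$, is recovered as before.

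The key new step is to lower bound the true gain of an entire batch by its optimistic pivot value, and this is where $\alpha$ enters. Ordering the elements with positive pivots as $s_1,\dots,s_m$ and writing $\vec g^{(0)} = \vec g_{\tau'}$, $\vec g^{(j)} = \vec g^{(j-1)} + l_{s_j}\vec s_j$, I would first promote the single-copy curvature bound to multiples: summing $\delta_{\vec s}(\vec w + i\vec s) \ge (1-\alpha)\delta_{\vec s}(\vec v + i\vec s)$ over $i=0,\dots,l-1$ gives $\delta_{l\vec s}(\vec w) \ge (1-\alpha)\delta_{l\vec s}(\vec v)$ whenever $\vec v \le \vec w$. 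Since $\vec g^{(j-1)} \ge \vec g_{\tau'}$, combining this with property (\ref{prop:1}) yields $\delta_{l_{s_j}\vec s_j}(\vec g^{(j-1)}) \ge (1-\alpha)\delta_{l_{s_j}\vec s_j}(\vec g_{\tau'}) \ge (1-\alpha)l_{s_j}\tau$, and telescoping over $j$ produces
\[
f(\vec g_\tau) - f(\vec g_{\tau'}) \ge (1-\alpha)\,\tau \sum_{s} l_s .
\]
This is the parallel analogue of property (\ref{prop:1}) holding along the running solution in the sequential proof, and the factor $(1-\alpha)$ is precisely the price of computing every pivot against the stale solution $\vec g_{\tau'}$.

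From here the argument follows Theorem \ref{thm:threshold} line for line. For each $s \in \{\vec\Omega - (\vec g_{\tau'} \wedge \vec\Omega)\}$ I would chain the displayed batch bound with $\tau \ge \kappa\gamma_d\delta_{\vec s}(\vec g_{\tau'})$ to obtain a per-batch version of Claim \ref{lemm:marge} carrying the extra $(1-\alpha)$; averaging over the at most $k$ such coordinates, invoking the lattice identity $\vec v \vee \vec w - \vec v = \vec w - \vec v \wedge \vec w$ and then the submodularity ratio exactly as before gives
\[
f(\vec g_\tau) - f(\vec g_{\tau'}) \ge \frac{(1-\alpha)\kappa\gamma_d\gamma_s \sum_s l_s}{k}\bigl(f(\vec\Omega) - f(\vec g_{\tau'})\bigr).
\]
Telescoping this recurrence across all threshold iterations, using $\sum_\tau \sum_s l_s = k$ for the modified run of Claim \ref{lemm:wlog} together with $1-x \le e^{-x}$, yields $f(\vec g) \ge \bigl(1 - e^{-(1-\alpha)\kappa\gamma_d\gamma_s}\bigr)f(\vec\Omega)$; the reduction of Claim \ref{lemm:wlog} and the assignment $\kappa = 1-\eta/2$, $\epsi = \eta/2$ then deliver the stated ratio $1 - e^{-(1-\alpha)\gamma_d\gamma_s} - \eta$. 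I expect the only genuine obstacle to be the batch-gain bound above: everything else is a transcription of the sequential proof, and the whole point is to verify that batching costs nothing beyond the curvature factor, which the multi-copy curvature inequality secures.
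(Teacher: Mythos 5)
Your proposal is correct and takes essentially the same route as the paper's own proof: the paper likewise observes that Claim \ref{lemm:wlog} and Property \ref{lemm:exist} survive parallelization (using the vectors $\vec h_i^s$ from the previous threshold for the DR-ratio bound), inserts the curvature factor via the chain $\delta_{l_i \vec s_i}(\vec g_i^t) \ge (1-\alpha)\delta_{l_i \vec s_i}(\vec g^t) \ge (1-\alpha) l_i \tau_t \ge \kappa(1-\alpha) l_i \gamma_d \delta_{\vec s}(\cdot)$ to obtain an analogue of Claim \ref{lemm:marge}, and then concludes exactly as in Theorem \ref{thm:threshold}. Your explicit promotion of the single-copy curvature inequality to multiples $\delta_{l\vec s}(\vec w) \ge (1-\alpha)\delta_{l\vec s}(\vec v)$ is a step the paper uses only implicitly, so your write-up is, if anything, slightly more careful on that point.
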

\begin{remark} \label{rem:greedy-versions}
  A careful analysis of the usage of $\gamma_d$,$\gamma_s$ in the proof of
  Theorem \ref{thm:threshold} shows that the full power of the definitions
  of these quantities 
  is not required. Rather, it is sufficient to consider ThresholdGreedy versions
  of these parameters, as defined in Appendix \ref{apx:nonsm-defs}. In the same
  way, we also have FastGreedy version of $\gamma_s$ based upon the proof
  of Theorem \ref{thm:fg-ratio}. The FastGreedy version of the DR ratio is 
  an integral part of how the algorithm works and is calculated
  directly by the algorithm, as we discuss in the next section.
\end{remark}
\subsection{The FastGreedy Algorithm} \label{sect:fg}
The proof of the performance ratio of ThresholdGreedy requires both 
the submodularty ratio $\gamma_s$ and the DR ratio $\gamma_d$. 
In this section, we provide an algorithm (FastGreedy, Alg. \ref{alg:fast-greedy}) 
that achieves ratio $1 - e^{-\beta^*\gamma_s} - \eta$, with
factor $\beta^* \ge \gamma_d$ that it can determine during its execution.
Appendix \ref{apx:fg} provides proofs for all lemmas, claims, and omitted details.
\paragraph{Description.} 
FastGreedy employs a threshold framework analogous to ThresholdGreedy.
Each iteration of the outer \textbf{while} loop of FastGreedy
is analogous to an iteration of the outer \textbf{for} loop in 
ThresholdGreedy, in which elements are added whose marginal
gain exceeds a threshold. 
FastGreedy employs BinarySearchPivot to
find pivots for each $s \in S$ for each threshold value $\tau$. 
Finally, the parameter $\epsi$ determines a minimum threshold value.

As its threshold, FastGreedy uses $\tau = \beta \kappa m$, 
where $m$ is the maximum marginal
gain found on line \ref{line:greedy}, 
parameter $\kappa$ is the intended stepsize between thresholds
as in ThresholdGreedy, and $\beta$ is an upper bound 
on the DR ratio $\gamma_d$, as described below.
This choice of $\tau$ has the following advantages over
the approach of ThresholdGreedy: (1) since the threshold is related
to the maximum marginal gain $m$, the theoretical performance ratio 
is improved; 
(2) the use of $\beta$ to lower the threshold 
ensures the same\footnote{Up to a constant factor, which depends
on $\gamma_d$.} worst-case query complexity
as ThresholdGreedy and leads to substantial 
reduction of the number of queries in practice, as we
demonstrate in Section \ref{sect:experiments}.
\paragraph{FastGreedy DR ratio $\beta^*$.}
If FastGreedy is modified\footnote{This modification can be accomplished by setting $\epsi$ to ensure the condition on line \ref{fg-while} is always true on this instance.} to continue until
$\lone{\vec g} = k$, let the final, smallest value $\beta^*$
of $\beta$ be termed the \emph{FastGreedy DR ratio} on
the instance. 
The FastGreedy DR ratio $\beta^*$ is at least the DR ratio $\gamma_d$
of the function, up to the parameter $\delta$:  
\begin{lemma} \label{lemm:beta}
  Let parameters $\kappa, \delta, \epsi \in (0,1)$  be given.
  Throughout the execution of FastGreedy on an instance of Problem \ref{prob:max} with $f \in \funcs^{\gamma_d,\gamma_s}_\vec b$, $\beta \ge \gamma_d\delta$.
  Since $\epsi$ can be arbitrarily small, $\beta^* \ge \gamma_d \delta$.
\end{lemma}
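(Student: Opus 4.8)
The plan is to argue by induction on the successive values that $\beta$ takes during the execution, maintaining the single invariant that every value assumed by $\beta$ satisfies $\beta \ge \gamma_d\delta$, and to do so by showing that each downward update of $\beta$ is \emph{certified}: at the instant $\beta$ is lowered, the algorithm has recorded an observation that is impossible unless $\gamma_d < \beta$. For the base case, $\beta$ is initialized to its maximum value $1$; since $\gamma_d \le 1$ and $\delta \in (0,1)$ we have $1 \ge \gamma_d \ge \gamma_d\delta$, so the invariant holds before any update.

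For the inductive step I would isolate the only rule that decreases $\beta$, namely $\beta \gets \delta\beta$, and show it never fires gratuitously. FastGreedy lowers $\beta$ only after it has recorded a pair of comparable vectors $\vec v \le \vec w$ (both arising as intermediate solutions, so nested because $\vec g$ only grows) and a coordinate $s$ with $\vec w + \vec s \le \vec b$ for which the observed marginal gains satisfy $\delta_{\vec s}(\vec v) < \beta\,\delta_{\vec s}(\vec w)$. I would then invoke the contrapositive of the definition of the DR ratio: if $\gamma_d \ge \beta$ held, then $\vec v \le \vec w$ would force $\beta\,\delta_{\vec s}(\vec w) \le \gamma_d\,\delta_{\vec s}(\vec w) \le \delta_{\vec s}(\vec v)$, contradicting the recorded inequality. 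Hence $\gamma_d < \beta$ at the moment of every update, so the new value $\delta\beta$ obeys $\delta\beta > \delta\gamma_d = \gamma_d\delta$. Every value of $\beta$ is therefore either the initial $1$ or a post-update value, and both satisfy the invariant; since $\beta$ is piecewise constant between updates, $\beta \ge \gamma_d\delta$ holds at every step, as claimed.

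I expect the main obstacle to be verifying that the inequality FastGreedy uses as its trigger is genuinely an instance of the DR-ratio relation, i.e.\ that the two gains it compares are taken at nested vectors $\vec v \le \vec w$ for a common coordinate $s$ with $\vec w + \vec s \le \vec b$. The non-submodularity is exactly what makes this delicate, since a coordinate's gain can rise as $\vec g$ grows, so I would trace the BinarySearchPivot call together with the threshold rule $\tau = \beta\kappa m$ to confirm which two marginal gains enter the comparison and that the monotone growth of $\vec g$ supplies $\vec v \le \vec w$. Finally, the conclusion about $\beta^*$ is immediate: choosing $\epsi$ small enough that the stopping condition on line~\ref{fg-while} never halts the algorithm early makes it continue until $\lone{\vec g}=k$, and $\beta^*$ is then simply the last value of $\beta$ in an execution to which the invariant already applies, whence $\beta^* \ge \gamma_d\delta$.
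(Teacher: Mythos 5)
Your proposal is correct and is essentially the paper's own argument in contrapositive form: the paper proves that whenever $\beta \le \gamma_d$ holds at some iteration, property (\ref{prop:2}) of the pivots found at threshold $\beta \kappa m$ forces every marginal gain at the next iteration below $\kappa m$, so the update condition on line \ref{line:nextThresh} cannot fire, whereas you prove that a firing of that condition certifies $\gamma_d < \beta$ --- the same ingredients, read in the opposite logical direction. On the one step you flagged: the trigger $m > \kappa m'$ is not by itself a DR violation, but it becomes one exactly as you anticipate once combined with property (\ref{prop:2}) from the preceding iteration --- for the coordinate $s$ achieving $m = \delta_{\vec s}(\vec g)$, that property supplies an intermediate solution $\vec g^s \le \vec g$ with $\delta_{\vec s}(\vec g^s) < \beta\kappa m' < \beta m = \beta\,\delta_{\vec s}(\vec g)$ --- so your plan goes through (and the initialization $m' = M/\kappa$ ensures the trigger never fires on the first iteration, before any pivots exist).
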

\begin{proof}
  Initally, $\beta = 1$; it decreases by a factor of $\delta \in (0,1)$
  at most once per iteration of the \textbf{while} loop. 
  Suppose $\beta \le \gamma_d$ for some iteration 
  $i$ of the \textbf{while} loop, and let $\vec g$ have the value 
  assigned immediately after iteration $i$, $m$ have the value
  assigned after line \ref{line:greedy} of iteration $i$. 
  Since a valid pivot was found 
  for each $s \in S$ during iteration $i$, by property (\ref{prop:2}) there exists
  $\vec g^s \le \vec g$, $\delta_{\vec s} ( \vec g^s ) < \beta \kappa m \le \gamma_d \kappa m$.
  Hence $\delta_{\vec s} (\vec g) \le \kappa m $, by the definition of DR ratio.
  In iteration $i + 1$, $m'$ has the value of $m$ from iteration $i$,
  so the value of $m$ computed during iteration $i + 1$ is at 
  most $\kappa m'$, and $\beta$ does not decrease during iteration $i + 1$.
\end{proof}
\begin{algorithm}[tb]
   \caption{FastGreedy}
   \label{alg:fast-greedy}
   \begin{algorithmic}[1]
     \STATE {\bfseries Input:} $f \in \funcs_{\vec b}$, $k \in \nats$, $\kappa, \delta, \epsi \in (0,1)$.
     \STATE {\bfseries Output:} $\vec g \in \lattice$
     \STATE $\vec g \gets 0$, $M \gets \max_{s \in S} f(\vec s)$, $m \gets M, m' \gets M / \kappa$, $\beta \gets 1$
     \WHILE{$m \ge M \epsi^2 / k$\label{fg-while}}
     \STATE $m \gets \max_{s \in S} \delta_{\vec s}( \vec g )$\label{line:greedy} 
     \IF{$m > \kappa m'$}\label{line:nextThresh}
     \STATE $\beta \gets \beta \delta$\label{line:reviseBeta}
     \ENDIF
     \STATE $m' \gets m$\label{line:newmp} 
     \STATE $\tau \gets \beta \kappa m$
     \FOR{$s \in S$\label{fg-for}}
     \STATE $l \gets $BinarySearchPivot$( f, \vec g, \vec b, s, k, \tau )$ \label{line:fgfp}
     \STATE $\vec g \gets \vec g + l\vec s$\label{line:greedyAdd}
     \IF{$\lone{\vec g} = k$}
     \STATE \textbf{return} $\vec g$
     \ENDIF
     \ENDFOR
     \ENDWHILE
   \STATE \textbf{return} $\vec g$
\end{algorithmic}
\end{algorithm}
\paragraph{Performance ratio.}
Next, we present the main result of this section. In contrast to
ThresholdGreedy, the factor of $\gamma_d$ in the performance ratio
has been replaced with $\beta^*$; at the termination of the algorithm,
the value of $\beta^*$ is an output of FastGreedy if the solution $\vec g$ satisfies 
$\lone{ \vec g } = k$. In any case, by Lemma \ref{lemm:beta}, the performance ratio
is at worst the same as that of ThresholdGreedy. 
\begin{theorem} \label{thm:fg-ratio}
  Let an instance of Problem \ref{prob:max} be given,
  with $f \in \funcs_{\vec b}^{\gamma_d,\gamma_s}$.
  Let $\vec g$ be the solution returned by FastGreedy with 
  parameters $\kappa, \delta, \epsi \in (0,1)$,
  and let $\vec \Omega$ be
  an optimal solution to this instance; also, suppose $\gamma_d \ge \epsi$. Let $\beta^*$ be the
  FastGreedy DR ratio on this instance. Then,
  \begin{align*} 
    f( \vec g ) &\ge \left(1 - e^{-\kappa \beta^* \gamma_s }- \epsi \right) f(\vec \Omega )
  \end{align*}
  The worst-case query complexity of FastGreedy is
  $O \left( \left( \log_{\delta}(\gamma_d) \log_{\kappa}( \gamma_d ) + \log_{\kappa} \epsi^2 / k \right) n \log k \right).$
\end{theorem}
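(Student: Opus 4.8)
The plan is to follow the template of the proof of Theorem~\ref{thm:threshold}, but to reorganize the accounting so that it proceeds per iteration of the outer \textbf{while} loop rather than per pivot; this reorganization is precisely what lets the factor $\gamma_d$ be replaced by $\beta^*$. As in Claim~\ref{lemm:wlog}, I would first reduce to the modified algorithm that runs until $\lone{\vec g}=k$ (the run that defines $\beta^*$), prove $f(\vec g)\ge(1-e^{-\kappa\beta^*\gamma_s})f(\vec\Omega)$ for that run, and then pay an additive $\epsi f(\vec\Omega)$ to recover the real stopping rule. Index the while iterations $i=1,\dots,T$, let $\vec a_i$ be the state at the start of iteration $i$ (so $\vec a_1=\vec 0$), let $m^{(i)}=\max_{s}\delta_{\vec s}(\vec a_i)$ be the quantity computed on line~\ref{line:greedy}, let $\tau^{(i)}=\beta^{(i)}\kappa m^{(i)}$ be the threshold, and let $L_i$ be the total number of copies added during iteration $i$.

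The key step is a single per-iteration inequality. Summing property~(\ref{prop:1}) over the pivots of iteration $i$ gives $f(\vec a_{i+1})-f(\vec a_i)\ge L_i\tau^{(i)}=L_i\beta^{(i)}\kappa m^{(i)}\ge L_i\beta^*\kappa m^{(i)}$, where the last step uses that $\beta$ is nonincreasing and hence $\beta^{(i)}\ge\beta^*$ throughout (Lemma~\ref{lemm:beta}). To lower bound $m^{(i)}$, I observe that since $m^{(i)}$ is the \emph{maximum} single-coordinate gain at $\vec a_i$, every $s\in\{\vec\Omega-\vec a_i\wedge\vec\Omega\}$ satisfies $\delta_{\vec s}(\vec a_i)\le m^{(i)}$, and there are at most $\lone{\vec\Omega}\le k$ such coordinates; combining this with the submodularity ratio, the identity $\vec\Omega\vee\vec a_i-\vec a_i=\vec\Omega-\vec a_i\wedge\vec\Omega$, and monotonicity yields $km^{(i)}\ge\sum_{s}\delta_{\vec s}(\vec a_i)\ge\gamma_s(f(\vec\Omega)-f(\vec a_i))$. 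Substituting gives $f(\vec\Omega)-f(\vec a_{i+1})\le\bigl(1-\tfrac{L_i\beta^*\kappa\gamma_s}{k}\bigr)(f(\vec\Omega)-f(\vec a_i))$, and telescoping with $1-x\le e^{-x}$ and $\sum_iL_i=k$ produces $f(\vec\Omega)-f(\vec g)\le e^{-\kappa\beta^*\gamma_s}f(\vec\Omega)$, the claimed bound for the modified run.

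This reorganization is the crux of the argument and the step I expect to be the main obstacle. In Theorem~\ref{thm:threshold} the threshold is a fixed geometric sequence, so the gains of $\vec\Omega$ had to be related to $\tau$ through the DR ratio (Claim~\ref{lemm:marge}), costing a factor $\gamma_d$; here the threshold $\tau^{(i)}=\beta^{(i)}\kappa m^{(i)}$ is tied to the \emph{current} maximum gain, so the bound $\delta_{\vec s}(\vec a_i)\le m^{(i)}$ is immediate and the DR ratio never enters the ratio proof directly. It re-enters only through the guarantee $\beta^{(i)}\ge\beta^*\ge\gamma_d\delta$ of Lemma~\ref{lemm:beta}, which is why the final factor is $\beta^*$ and is never worse than the $\gamma_d$ of ThresholdGreedy. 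For the additive $\epsi$ term I would compare the real output $\vec g$ with the modified output $\vec g'$: at termination the last maximum gain computed on line~\ref{line:greedy} is below $M\epsi^2/k$, so applying the submodularity ratio to $\vec g\le\vec g'$ bounds $f(\vec g')-f(\vec g)$ by a sum of at most $k$ gains each below $M\epsi^2/k$, hence by $M\epsi^2/\gamma_s\le\epsi f(\vec\Omega)$, where I use $M\le f(\vec\Omega)$ and the hypothesis $\gamma_d\ge\epsi$ together with $\gamma_s\ge\gamma_d$ (Proposition~\ref{prop:sm-ratio}) to get $\epsi/\gamma_s\le1$.

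Finally, for the query complexity I would bound the number of while iterations, each of which costs $O(n\log k)$ queries ($O(n)$ for line~\ref{line:greedy} and $n$ calls to BinarySearchPivot, each $O(\log k)$ by Lemma~\ref{lemm:pivot}). I would classify each iteration as \emph{$\beta$-decreasing} (line~\ref{line:reviseBeta} fires) or \emph{$m$-decreasing} ($m\le\kappa m'$, so $m$ shrinks by at least a factor $\kappa$ relative to the previous value $m'$). The first kind occurs $O(\log_\delta\gamma_d)$ times, since $\beta$ starts at $1$ and stays $\ge\gamma_d\delta$ by Lemma~\ref{lemm:beta}; moreover, by the DR ratio applied to the increasing state, any single iteration can raise $m$ by at most a factor $1/\gamma_d$, so the $\beta$-decreasing iterations together undo at most $O(\log_\delta\gamma_d\cdot\log_\kappa\gamma_d)$ of the progress measured in $\log_\kappa m$. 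Since $m$ ranges over $[M\epsi^2/k,\,M]$, charging the $m$-decreasing iterations against this range plus these recoveries bounds their number by $O\bigl(\log_\kappa(\epsi^2/k)+\log_\delta\gamma_d\log_\kappa\gamma_d\bigr)$, and multiplying the total iteration count by $O(n\log k)$ gives the stated complexity.
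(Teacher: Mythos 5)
Your proposal follows the paper's own proof essentially step for step: the same reduction to the budget-exhausting run (Claim~\ref{claim:fg-wlog}) costing an additive $\epsi f(\vec \Omega)$, the same per-iteration inequality obtained by summing pivot property~(\ref{prop:1}) over an iteration of the \textbf{while} loop and bounding $m^{(i)}$ against $\vec \Omega$ via the submodularity ratio, the lattice identity, and monotonicity, followed by the same telescoping; and for the query complexity, the same uptick-counting argument combining Lemma~\ref{lemm:beta} (at most $\log_{\delta} \gamma_d$ upticks) with the observation that one iteration can raise the maximum gain by at most a factor $1/\gamma_d$, so each uptick costs at most $\log_{\kappa} \gamma_d$ extra iterations. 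The only cosmetic difference is in the reduction step, where you apply the submodularity ratio to the pair $(\vec g, \vec g')$ while the paper applies the DR ratio term by term; both yield the same bound since $\gamma_s \ge \gamma_d \ge \epsi$.
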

If $\eta > 0$ is given, the assignment $\kappa = (1 - \eta / 2)$, $\epsi = \eta / 2$
yields performance ratio at least $1 - e^{-\beta^* \gamma_s} - \eta$.
\begin{proof}[Proof of query complexity]
  The performance ratio is proved in Appendix \ref{apx:fg}.
  Let $m'_1, \ldots, m'_K$ be the sequence of $m'$ values in the order
  considered by the algorithm. By Lemma \ref{lemm:beta}, $m'_j > \kappa m'_{j - 1}$ 
  at most 
  $\Gamma = \log_{\delta} \gamma_d$ times; label each such
  index $j$ an \emph{uptick}, and let $j_1, \ldots, j_l$ be the indices
  of each uptick in order of their appearance. Also, let $k_i$ be the
  first index after $j_i$ such that $m'_{k_i} \le \kappa m'_{j_i - 1}$, for
  each $i \in \{1, \ldots, l\}$.

  Next, we will iteratively delete from the sequence of $m'$
  values. Initially, let $\ell = l$ be the last uptick 
  in the sequence; delete all terms $m'_{j_\ell}, \ldots, m'_{k_\ell - 1}$
  from the $m'$ sequence. Set $\ell = \ell - 1$ and repeat this process 
  until $\ell = 0$. 
  \begin{claim} \label{claim:fast}
    For each $\ell$ selected in the iterative
    deletion above, there are 
    at most $\log_{\kappa} \gamma_d$
    values deleted from the sequence.
  \end{claim}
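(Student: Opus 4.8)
The plan is to analyze the segment of consecutive $m'$ values deleted at each step $\ell$ of the iterative deletion, namely the terms $m'_{j_\ell}, \ldots, m'_{k_\ell - 1}$, and to bound its length by $\log_\kappa \gamma_d$. The key insight should be that $j_\ell$ is an uptick, meaning $m'_{j_\ell} > \kappa m'_{j_\ell - 1}$, while $k_\ell$ is by definition the first index after $j_\ell$ where the $m'$ value drops back to at most $\kappa m'_{j_\ell - 1}$. Within the window of deleted indices $j_\ell, \ldots, k_\ell - 1$, every value exceeds the threshold $\kappa m'_{j_\ell - 1}$, so I need to argue these values cannot persist above that threshold for too many iterations.

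First I would recall the decrease mechanism established in the proof of Lemma~\ref{lemm:beta}: whenever no uptick occurs at a given iteration (i.e., $m \le \kappa m'$), the maximum marginal gain $m$ is shrinking by at least a factor of $\kappa$ relative to the previous value, because $m'$ simply inherits the prior $m$. Crucially, within a single deleted window there is exactly one uptick, located at its start $j_\ell$; by the definition of $k_\ell$ being the \emph{first} index after $j_\ell$ returning below $\kappa m'_{j_\ell-1}$, none of the intermediate indices $j_\ell+1, \ldots, k_\ell-1$ is an uptick. Hence from $m'_{j_\ell}$ onward each successive non-uptick step multiplies the $m'$ value by at most $\kappa$, so $m'_{j_\ell + r} \le \kappa^r m'_{j_\ell}$ for the relevant range.

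Next I would combine this geometric decay with the two anchoring facts: $m'_{j_\ell} \le M$ (since $M$ is the largest singleton value and marginal gains are nonincreasing in the sense controlled by $\gamma_d$, or simply because $m'_{j_\ell}$ is a maximum marginal gain from the current solution which is at most $M$), and that all deleted indices satisfy $m' > \kappa m'_{j_\ell - 1} \ge \kappa \gamma_d M / \text{(something)}$. More carefully, the relevant lower bound on the window values comes from the relationship $m'_{j_\ell - 1} \ge$ a quantity proportional to $\gamma_d M$, so that $m'_{j_\ell} / m'_{k_\ell - 1}$ is at most a factor of order $1/\gamma_d$. Setting $\kappa^r \ge \gamma_d$ as the condition for a value to still exceed the window threshold forces $r \le \log_\kappa \gamma_d$, which yields the claimed bound on the number of deleted terms.

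The main obstacle I expect is pinning down the exact lower bound on the $m'$ values inside the window in terms of $\gamma_d M$ and reconciling it with the upper bound $M$ on the first term, so that the ratio of largest to smallest deleted value is genuinely $O(1/\gamma_d)$; this requires carefully tracking how $m'_{j_\ell - 1}$ relates to the global scale $M$ and to the uptick threshold. A secondary technical point is ensuring the windows $[j_\ell, k_\ell)$ for distinct $\ell$ are disjoint (so the deletions are well-defined and the total count is an honest sum), which should follow from processing upticks from last to first and from the definition of $k_\ell$, but must be verified. Once the per-window length bound of $\log_\kappa \gamma_d$ is in hand, summing over the at most $\Gamma = \log_\delta \gamma_d$ upticks, plus the residual $\log_\kappa(\epsi^2/k)$ iterations after the final window, will give the stated worst-case query complexity.
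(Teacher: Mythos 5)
There is a genuine gap, and it sits exactly where you flag ``the main obstacle.'' Your plan correctly uses the geometric decay inside a window (each non-uptick step loses a factor $\kappa$) and the lower bound $m' > \kappa m'_{j_\ell - 1}$ on deleted values coming from the definition of $k_\ell$. But to get a window length of $\log_\kappa \gamma_d$ you need an \emph{upper} bound on the window's starting value $m'_{j_\ell}$ that is within a factor $1/\gamma_d$ of $m'_{j_\ell - 1}$, and your proposed route --- anchoring both ends to the global scale $M$ via ``$m'_{j_\ell} \le M$ and $m'_{j_\ell - 1} \ge c\,\gamma_d M$'' --- fails: no such lower bound on $m'_{j_\ell - 1}$ holds. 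The maximum marginal gain decays over the run of the algorithm and is only guaranteed to stay above the stopping threshold $M \epsi^2 / k$, so a window can open at a scale far below $\gamma_d M$; with that anchor the best per-window bound you could extract is of order $\log_\kappa (\epsi^2/k)$, not $\log_\kappa \gamma_d$. The missing idea, which is the heart of the paper's proof, is a \emph{local} consequence of the DR ratio: for every $i$, $m'_i \le m'_{i-1}/\gamma_d$. Indeed $m'_i = \max_{s \in S} \delta_{\vec s}(\vec g^i)$ and $m'_{i-1} = \max_{s \in S} \delta_{\vec s}(\vec g^{i-1})$ with $\vec g^{i-1} \le \vec g^i$, so for each $s$ the DR ratio gives $\gamma_d \delta_{\vec s}(\vec g^{i}) \le \delta_{\vec s}(\vec g^{i-1}) \le m'_{i-1}$. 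Applied at $i = j_\ell$, this caps the uptick jump: the deleted subsequence starts at most at $m'_{j_\ell - 1}/\gamma_d$, stays above $\kappa m'_{j_\ell - 1}$, and shrinks by a factor of $\kappa$ per step, which yields the claimed $\log_\kappa \gamma_d$ bound.

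A secondary issue: your justification that no intermediate index of the window is an uptick (``by the definition of $k_\ell$'') is not right --- the definition of $k_\ell$ only constrains values relative to $\kappa m'_{j_\ell - 1}$ and does not by itself exclude upticks strictly between $j_\ell$ and $k_\ell$. What actually validates the per-step $\kappa$-decay is the deletion order: windows are processed from the last uptick backwards, so when $\ell$ is processed, $j_\ell$ is the last uptick still present and every remaining later term satisfies $m'_i \le \kappa m'_{i-1}$. You noticed this ordering/disjointness issue yourself, but attributed the needed property to the wrong cause.
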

  By Claim \ref{claim:fast} and the bound on the number of upticks, 
  we have deleted at most 
  $\log_{\kappa} \gamma_d \log_{\delta} \gamma_d$ thresholds
  $m'$ from the sequence; every term in the remaining sequence 
  satisfies $m'_j \le \kappa m'_{j-1}$; hence, the remaining sequence
  contains at most $\log_{\kappa} \epsi^2 / k$ terms, by
  its initial and terminal values. The query complexity follows
  from the number of queries per value of $m'$, which is $O( n \log k )$ by
  Lemma \ref{lemm:pivot}. 
\end{proof}
\section{Influence Maximization: A General Framework} \label{sect:IM}
In this section, we provide a non-submodular framework for
viral marketing on a social network that unifies
the classical influence maximization \citep{Kempe2003} 
with the boosting problem \citep{Lin2017}.
\paragraph{Overview.}
The goal of influence maximization 
is to select seed users (\ie initially activated users) to maximize the expected
adoption in the social network, where the total number of seeds is restricted by a budget,
such that the expected adoption in the social network is maximized. 
The boosting problem
is, given a fixed seed set $S$, to incentivize (\ie increase the susceptibility of a user
to the influence of his friends) 
users within a budget such that the expected adoption with seed set $S$
increases the most. 

Our framework combines the above two scenarios with a partial incentive: 
an incentive (say, $x$\% off the purchase price) 
increases the probability a user will purchase the
product independently and increases
the susceptibility of the user to the influence of his friends. Hence, our problem 
asks how to best allocate the budget between (partially) seeding users and boosting the
influence of likely extant seeds. Both
the classical influence 
maximization and the non-submodular boosting problem can be obtained as special cases,
as shown in Appendix \ref{apx:im}.

Our model is related to the formulation of \citet{Demaine2014}; however, they employ
a submodular threshold-based model, while our model is inherently non-submodular 
due to the boosting mechanism \citep{Lin2017}.
Also,
GIM is related to the submodular budgeted allocation problem of
\citet{Alon2012}, in which the influence of an advertiser increases
with the amount of budget allocated; the main difference with GIM is that
we modify incoming edge weights with incentives instead of outgoing, which
creates the boosting mechanism responsible for the non-submodularity.
\paragraph{Model.} Given a social network $G = (V,E)$, and a product
$p$, we define the following model of adoption. The 
allocation of budget to $u$ is
thought of as a discount towards purchasing the product;
this discount increases the probability
that this user will adopt or purchase the product. Furthermore,
this discount increases the susceptibility of the user to influence from its (incoming) social
connections. 

Formally, an incentive level 
$\vec x_u$ is chosen for each user $u$. With independent 
probability $p(u, \vec x_u)$, user $u$ initially activates or adopts
the product; altogether, this creates a probabilistic initial set 
$S$ of activated users. Next, through the
classical Independent Cascade (IC) model\footnote{The IC model is defined in Appendix \ref{apx:im}.} 
of adoption, users influence their
neighbors in the social network; wherein the weight $p(v,u, \vec x_u)$ for edge
$(v,u)$ is determined by the incentive level $\vec x_u$ of user $u$
as well as the strength of the social connection from $v$ to $u$.


We write $p^{\vec x}(H, T)$ to denote the probability of full graph realization $H$ and
seed set $T$ when $\vec x$ gives the incentive levels for each user.
We write $R(H,T)$ to denote the size
of the reachable set from $T$ in realization $H$.
The expected activation in the network 
given a choice $\vec x$ of incentive levels is given by
$\mathbb{I}( \vec x ) = \sum_{T \subseteq V}\sum_{H \subseteq G} p^{\vec x}(H,T) R(H,T),$
where an explicit formula for $p^{\vec x}(H,T)$ is given in Appendix
\ref{apx:im}.  
Finally, let $\mathbb{A}( \vec x ) = \mathbb{I}( \vec x ) - \mathbb{I}( \vec 0 )$.
\begin{deff}[Generalized Influence Maximization (GIM)]
Let social network $G = (V,E)$ be given, together with the mappings
$i \mapsto p(u, i)$, $i \mapsto p(u,v,i)$, for all $u \in V, (u,v) \in E$,
for each $i \in \{0,\ldots,L\}$, where $L$ is the number of incentive levels.
Given budget $k$, 
determine incentive levels $\vec x$, with $\lone{\vec x} \le k$,
such that $\mathbb A ( \vec x )$ is maximized.
\end{deff}
\paragraph{Bound on non-submodularity.} 
Next, we provide a lower bound on the greedy submodularity ratios (see Appendix \ref{apx:nonsm-defs}).
We emphasize that the assumption that the probability mappings as a function of incentive level
be submodular does not imply the objective
$\mathbb{A}(\vec x)$ is DR-submodular.
Theorem \ref{thm:dr-lower-bound} is proved in Appendix \ref{apx:im}.
\begin{theorem} \label{thm:dr-lower-bound}
Let $\mathcal I$ be an instance of GIM, with budget $k$.
Let ${c_e} = \max_{(u,v) \in E, i \in L} \frac{p(u,v, i + 1)}{p(u,v, i)}$, $c_n = \max_{x \in V, i \in L} \frac{p(x, i + 1)}{p(x, i)}$. Suppose for all $(u,v) \in E, w \in V$, the mappings $i \mapsto p(u,v, i)$, $i \mapsto p(w,i)$
are submodular set functions.
Then, the greedy submodularity ratios defined in Appendix \ref{apx:nonsm-defs} and the
FastGreedy DR ratio are lower bounded by $c_e^{-k\Delta} c_n^{-k},$
where $\Delta$ is the maximum in-degree in $G$.
\end{theorem}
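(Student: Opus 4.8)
The plan is to work in the live-edge (triggering) expansion of the objective. First I would expand $\mathbb{A}(\vec x) = \mathbb{I}(\vec x) - \mathbb{I}(\vec 0)$ as an expectation over realizations: a realization fixes, independently, the seed status of each node $w$ (present with probability $p(w,\vec x_w)$) and the live/dead status of each edge $(v,u)$ (live with probability $p(v,u,\vec x_u)$), and $R$ counts the nodes reachable from the seeds, so that $\mathbb{I}(\vec x) = \sum_{(H,T)} p^{\vec x}(H,T) R(H,T)$ and the marginal gain of one extra incentive level at $u$ is $\delta_{\vec u}(\vec x) = \sum_{(H,T)}\big(p^{\vec x+\vec u}(H,T)-p^{\vec x}(H,T)\big)R(H,T)$. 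Because the greedy submodularity ratios and the FastGreedy DR ratio are defined (Appendix \ref{apx:nonsm-defs}) by the same defining inequalities restricted to the vectors that arise along the algorithms' trajectories, all of which satisfy $\lone{\cdot}\le k$, it suffices to lower bound, for every relevant pair $\vec v \le \vec w$ with $\lone{\vec w}\le k$ and every coordinate $\vec u$, the ratio $\delta_{\vec u}(\vec v)/\delta_{\vec u}(\vec w)$; the bound on $\gamma_s$ then follows from this marginal estimate (or from Proposition \ref{prop:sm-ratio}), and each greedy/restricted ratio, being a maximum over a smaller family of vectors, is no smaller.

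Second, I would condition on everything except the randomness local to $u$, namely the seed status of $u$ and the liveness of the edges entering $u$, calling the remainder the environment $\rho_{-u}$. The point is that $R$ depends on $u$'s local state only through whether $u$ is activated, and activation contributes a fixed downstream amount $r(\rho_{-u})\ge 0$; moreover $u$'s activation probability at level $i$ is the noisy-OR $q_i(\rho_{-u}) = 1-(1-p(u,i))\prod_{v\in D}(1-p(v,u,i))$, where $D=D(\rho_{-u})$ is the set of in-neighbours of $u$ reachable from the seeds in $G-u$, so $|D|\le\Delta$. This yields $\delta_{\vec u}(\vec x)=\ex{(q_{\vec x_u+1}-q_{\vec x_u})\,r}$ under $\rho_{-u}\sim\vec x_{-u}$, which cleanly separates the two sources of discrepancy between $\delta_{\vec u}(\vec v)$ and $\delta_{\vec u}(\vec w)$: the level of $u$ itself ($\vec v_u$ versus $\vec w_u$, entering $q$) and the environment ($\vec v_{-u}$ versus $\vec w_{-u}$, entering both the weighting of $\rho_{-u}$ and the set $D$).

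Third, I would dispatch the two effects. The $u$-level direction is favourable: each map $i\mapsto p(u,i)$ and $i\mapsto p(v,u,i)$ is increasing with diminishing increments in the level, so each $1-p$ is positive, decreasing, and convex; a product of such terms is convex, whence $q$ is concave in $u$'s level and the increment $q_{\vec w_u+1}-q_{\vec w_u}$ at the higher base is no larger than $q_{\vec v_u+1}-q_{\vec v_u}$. This direction therefore behaves submodularly and needs no $c_n,c_e$ factor. The environment direction is the genuinely non-submodular part, caused by boosting: raising the incentive of an in-neighbour of $u$ makes it more likely to be an external activator (enlarging $D$), while raising $u$'s incentive boosts the corresponding incoming edge, a supermodular coupling. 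I would control it with a monotone coupling of the two environment distributions, under which every seed or live edge realised under $\vec v_{-u}$ is realised under $\vec w_{-u}$, so that $D$ and $r$ move monotonically; the ratio of the two expectations is then governed by the probability that the extra seeds and live edges needed to place an in-neighbour into $D$ are realised. The key quantitative input is that one unit of incentive multiplies a node's seed probability by at most $c_n$ and each of its at most $\Delta$ incoming edge weights by at most $c_e$; since the total incentive in $\vec w$ is at most $k$, these \emph{positive} probability factors compound to at most $c_n^{k}c_e^{k\Delta}$, while the complementary $(1-p)$ factors move in the favourable direction. Combining gives $\delta_{\vec u}(\vec w)\le c_e^{k\Delta}c_n^{k}\,\delta_{\vec u}(\vec v)$, hence $\gamma_d\ge c_e^{-k\Delta}c_n^{-k}$, and the same marginal estimate feeds the submodularity-ratio and $\beta^*$ bounds.

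I expect the main obstacle to be exactly this last environment/boosting step. The naive attempt, bounding the ratio $p^{\vec w}(H,T)/p^{\vec v}(H,T)$ realization by realization, fails: for realizations with unseeded nodes or dead edges the $(1-p)$ factors make this ratio swing the wrong way and it is not bounded by any power of $c_n,c_e$. The work is therefore in isolating the \emph{positive} probability factors, the $p$'s along the seed-to-in-neighbour structure that actually enable the boost, which are the ones obeying the clean per-unit $c_n$ and $c_e^{\Delta}$ bounds, from the $(1-p)$ factors, and in showing via the monotone coupling that the latter only help. Getting the accounting to total exactly $k$ units of incentive, so that the exponents are $k$ and $k\Delta$ rather than something larger, is the delicate part.
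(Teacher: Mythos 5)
Your proposal is correct in substance and takes a genuinely different route from the paper. The paper (Appendix~E) works directly on full realizations $(H,T)$: it writes $p^{\vec w+\vec s}=K_1p^{\vec v+\vec s}$ and $p^{\vec w}=K_2p^{\vec v}$, proves $K_1(H,T)\le K_2(H,T)\le c_e^{k\Delta}c_n^k$ (Lemma~\ref{lemm:gamma-bound}), and then sums against $R\ge 0$; your opening reduction is exactly the paper's Claim~\ref{clm:suffices3}. Your two ``effects'' correspond precisely to the paper's two inequalities: the level direction (your concavity of the noisy-OR, the paper's $K_1\le K_2$) is where the submodularity hypothesis enters, and the environment direction (your change of measure, the paper's $K_2\le c_e^{k\Delta}c_n^k$) is where monotonicity and the definitions of $c_e,c_n$ enter. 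What your conditioning buys is real: after fixing the environment $\rho_{-u}$, the integrand $(q_{\vec x_u+1}-q_{\vec x_u})\,r$ is pointwise nonnegative, so bounding $\mathbb{E}_{\mu_{\vec w}}[F]\le\bigl(\sup d\mu_{\vec w}/d\mu_{\vec v}\bigr)\,\mathbb{E}_{\mu_{\vec v}}[F]$ is immediate. The paper's final display instead multiplies the sign-indefinite terms $(p^{\vec v+\vec s}-p^{\vec v})R$ by $K_2$ and then replaces $K_2$ by its maximum, a step that is only valid term-by-term on the nonnegative terms; your route sidesteps this subtlety entirely.

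Two soft spots, both repairable. First, under your monotone coupling $D$ indeed only grows, but $r$ does \emph{not} move monotonically: extra environment seeds or live edges can pre-activate $u$'s downstream and shrink $r$. That intermediate claim is false as stated, but also unnecessary: since $(q_{+1}-q)r\ge 0$, the pointwise likelihood-ratio bound you also invoke (seeded/live factors compound to at most $c_n^kc_e^{k\Delta}$, while the $1-p$ factors only decrease the ratio) finishes the environment step with no coupling and no monotonicity of $r$. Relatedly, your diagnosis that the realization-wise ratio bound ``fails'' is miscalibrated: the upper bound $p^{\vec w}(H,T)/p^{\vec v}(H,T)\le c_e^{k\Delta}c_n^k$ \emph{does} hold realization by realization (it is part~(c) of Lemma~\ref{lemm:gamma-bound}, since the $1-p$ ratios are at most $1$); the genuine realization-wise obstacle is the sign-indefiniteness of the marginal differences, which is exactly what your conditioning cures. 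Second, the pairs needed for the greedy submodularity ratios satisfy $\lone{\vec w-\vec v}\le k$ but can have $\lone{\vec w}$ as large as $2k$, so your restriction ``$\lone{\vec w}\le k$'' does not cover them; run the accounting on the increment $\vec w-\vec v$ (only the differences $\vec w_x-\vec v_x$ enter the likelihood ratio), as in Claim~\ref{clm:suffices3}, and the exponents $k$ and $k\Delta$ are unchanged.
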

\section{Experimental Evaluation} \label{sect:experiments}
In this section, we evaluate our proposed algorithms for the
GIM problem defined in Section \ref{sect:IM}.
We evaluate our algorithms as compared with StandardGreedy;
by the naive reduction of the lattice to sets in exponential time,
this algorithm is equivalent performing this reduction and running 
the standard greedy for sets, the performance of which for non-submodular
set functions was analyzed by \citet{Bian2017}.

In Section \ref{sect:methods}, we describe our methodology;
in Section \ref{sect:experiments-comparison}, we compare
the algorithms and non-submodularity parameters.
In Appendix \ref{sect:experiments-parameters},
we explore the behavior of FastGreedy as the parameters $\delta,\kappa$, and
$\epsi$ are varied.
\subsection{Methodology} \label{sect:methods}
Our implementation uses Monte Carlo sampling to
estimate the objective value $\mathbb{A}( \vec x )$, with \num{10000} samples
used. As a result, each function
query is relatively expensive. 

We evaluate on two networks taken from the SNAP dataset \cite{snapnets}:
\texttt{ca-GrQc} (``GrQc''; 15k nodes, 14.5K edges) and
\texttt{facebook} (``Facebook''; 4k nodes, 176K edges).  Unless
otherwise specified, we use 10 repetitions per datapoint and display the
mean. The width of shaded intervals is one standard deviation.
Standard greedy is omitted from some figures where running time is
prohibitive. Unless noted otherwise, we use default settings of
$\epsi = 0.05$, $\delta = 0.9$, $\kappa = 0.95$. We use a uniform
box constraint and assign each user the same number of incentive levels;
the maximum incentive level for a user corresponds to giving the product
to the user for free and hence deterministically seeds the user; we 
adopt linear models for the mappings $i \mapsto p(u,i)$, $i \mapsto p(u,v,i)$.
We often plot versus $K$, which is defined as the maximum number of deterministic
seeds; for example, if $k = 200$ with $10$ incentive levels, then $K= 20$.

\subsection{Results}\label{sect:experiments-comparison}
\begin{figure}[t]
  \centering
  \subfigure[GrQc (10 levels)]{\includegraphics[width=0.22\textwidth,height=0.13\textheight]{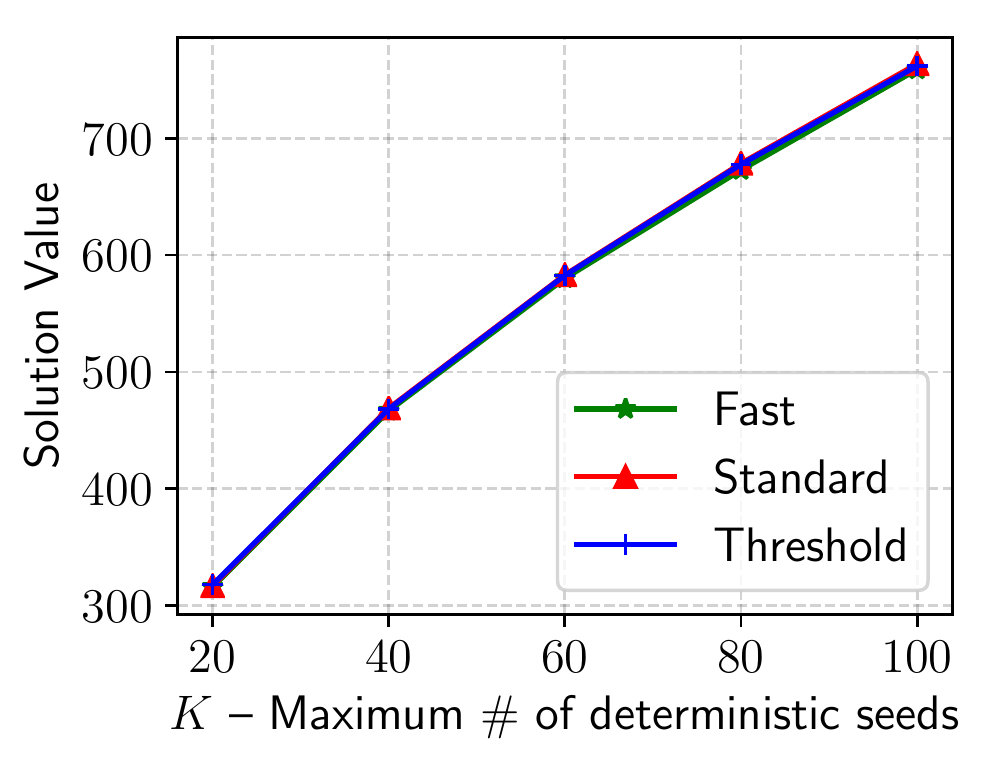}\label{fig:grqc-qos}}
  \subfigure[Facebook (100 levels)]{\includegraphics[width=0.22\textwidth,height=0.13\textheight]{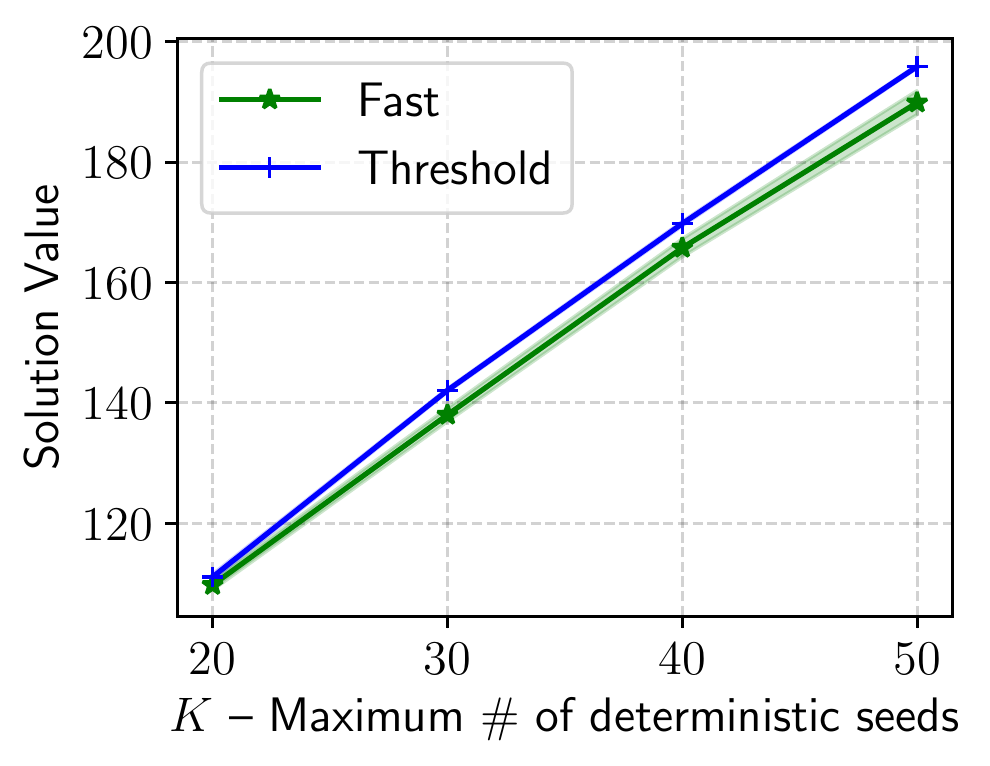}\label{fig:fb-qos}}
  \caption{Activation $\mathbb{A}( \vec g )$ for the solution returned by each algorithm.} \label{fig:qos}
\end{figure}
\begin{figure}[t]
  \centering
    \subfigure[GrQc (10 levels)]{\includegraphics[width=0.22\textwidth,height=0.13\textheight]{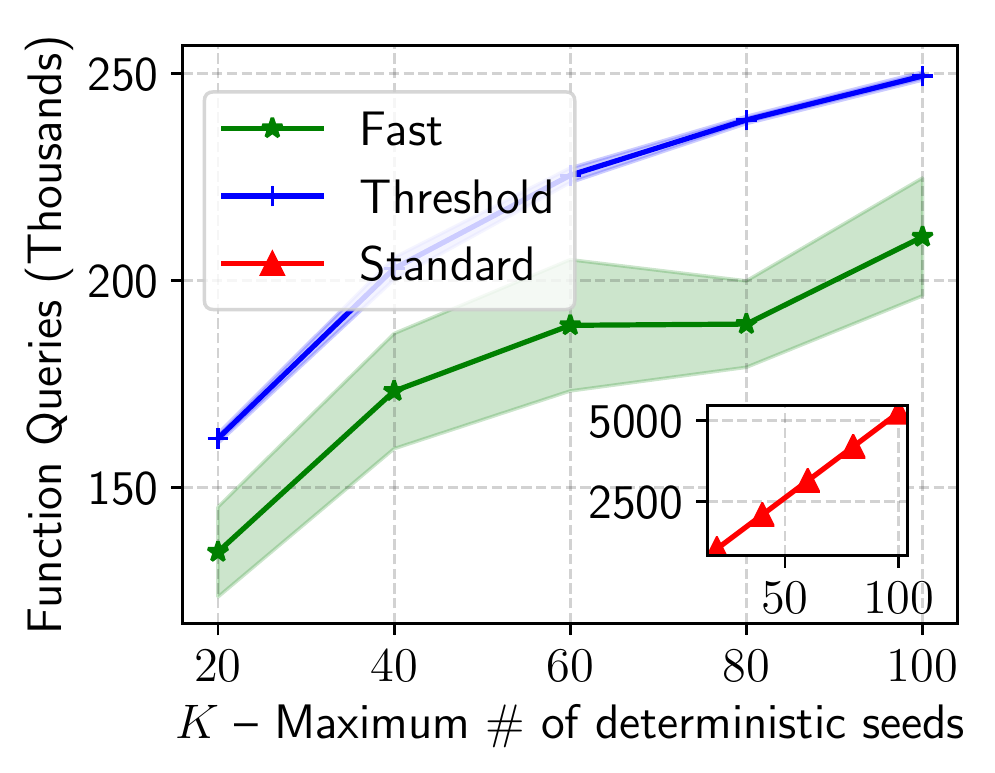}}
    \subfigure[Facebook (100 levels)]{\includegraphics[width=0.22\textwidth,height=0.13\textheight]{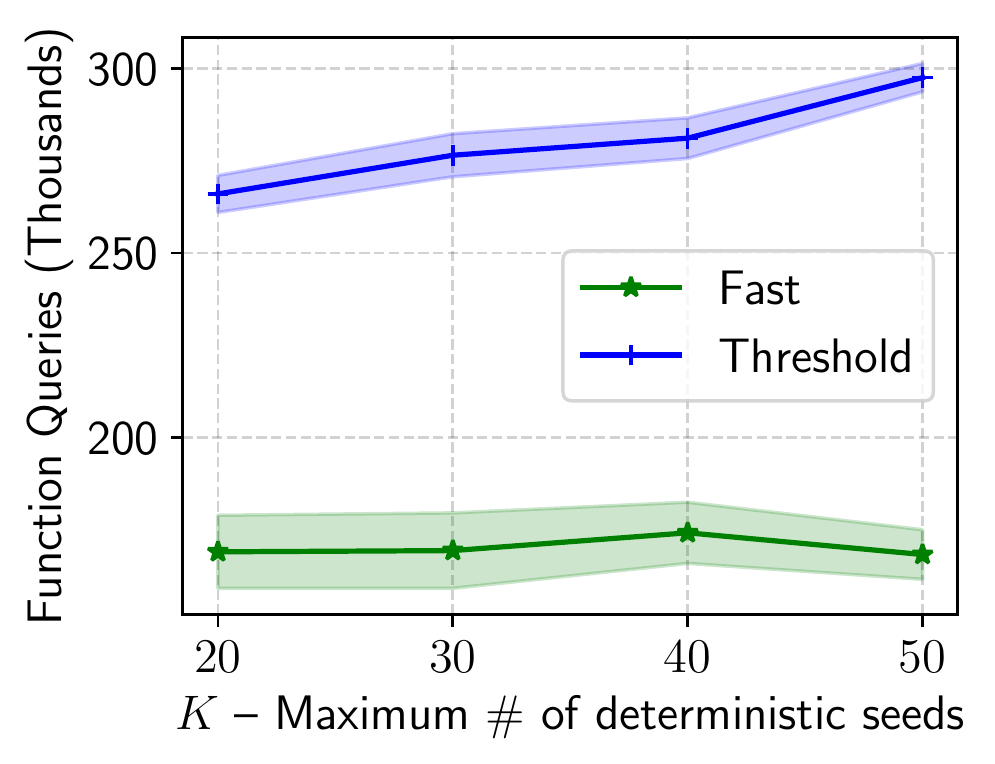}\label{fig:q-fb}}
  \caption{Total function queries on the GrQc and Facebook networks.} \label{fig:queries}
\end{figure}
\begin{figure}[t]
  \centering
  \subfigure[GrQc (10 levels)]{\includegraphics[width=0.22\textwidth,height=0.13\textheight]{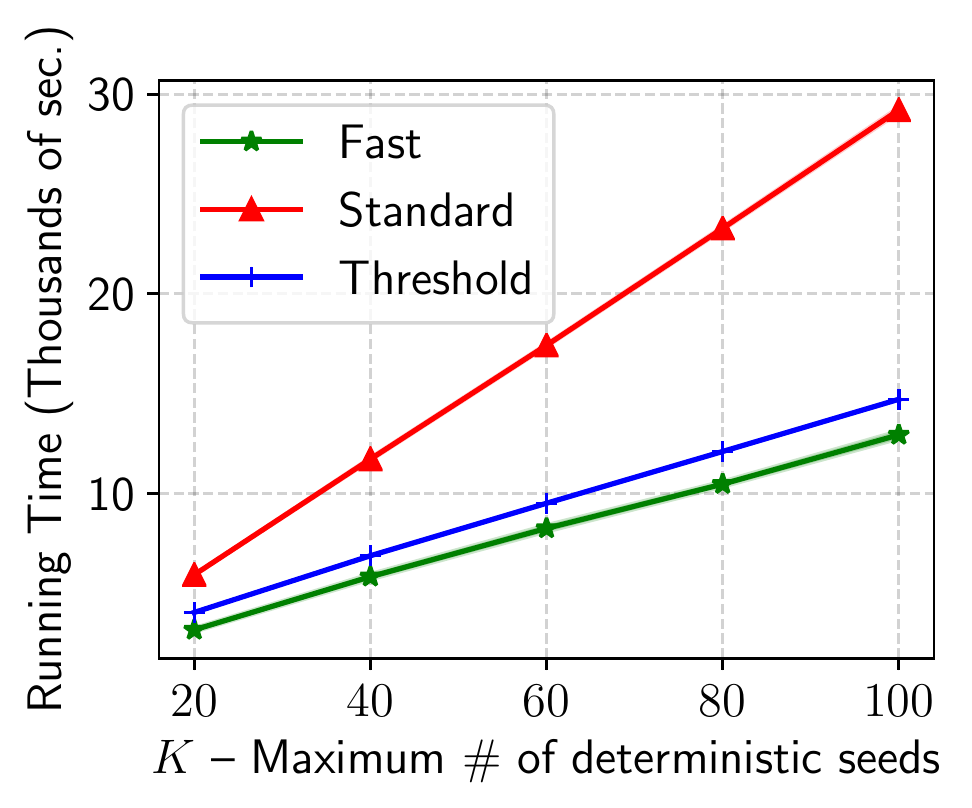}}
  \subfigure[Facebook (100 levels)]{\includegraphics[width=0.22\textwidth,height=0.13\textheight]{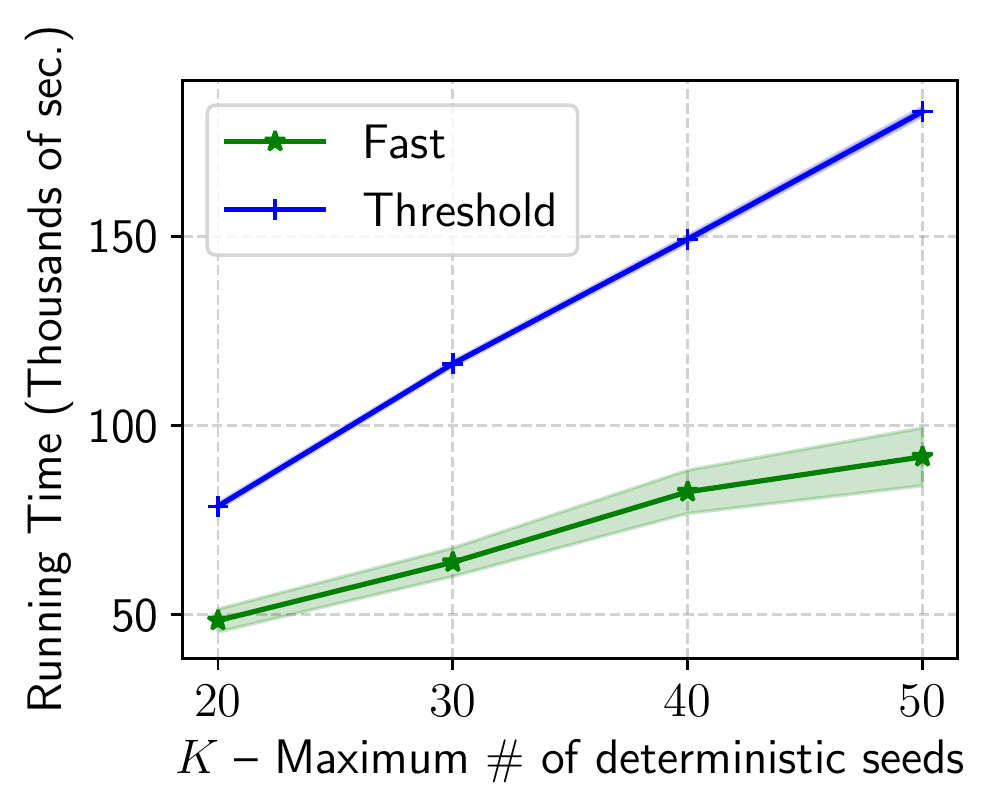}}
  \caption{Runtime on the GrQc and Facebook networks with 100 levels.} \label{fig:runtime}
\end{figure}
\begin{figure}[t] 
  \centering
  \subfigure[GrQc]{\includegraphics[width=0.22\textwidth,height=0.13\textheight]{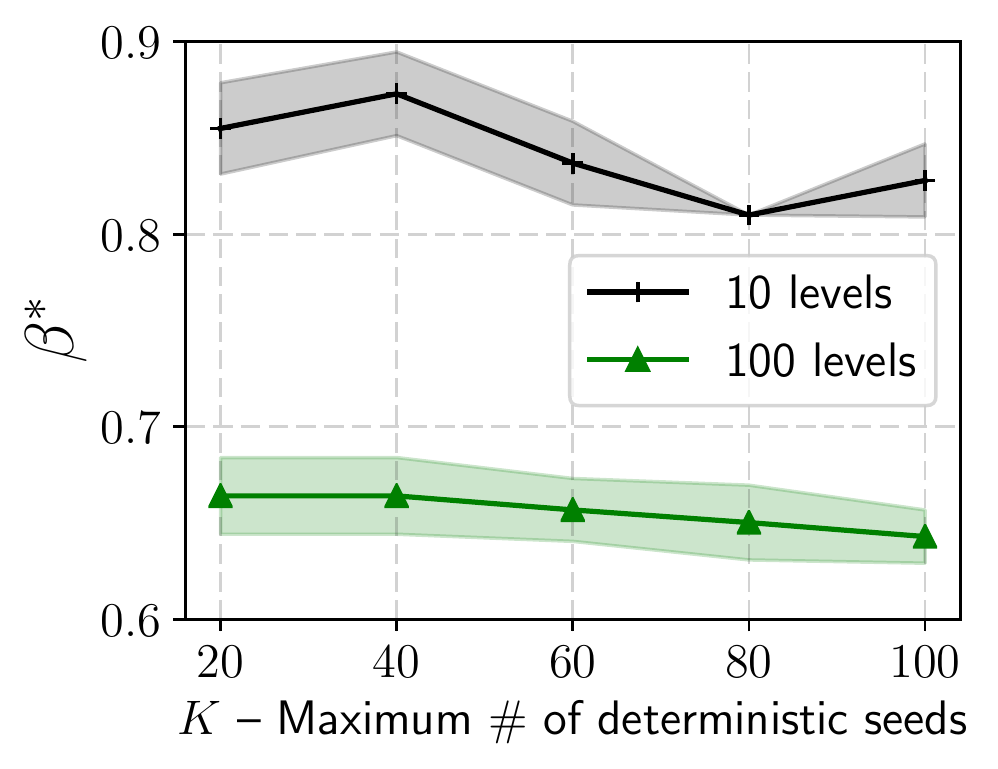}\label{fig:beta}}
  \subfigure[BA network]{\includegraphics[width=0.22\textwidth,height=0.13\textheight]{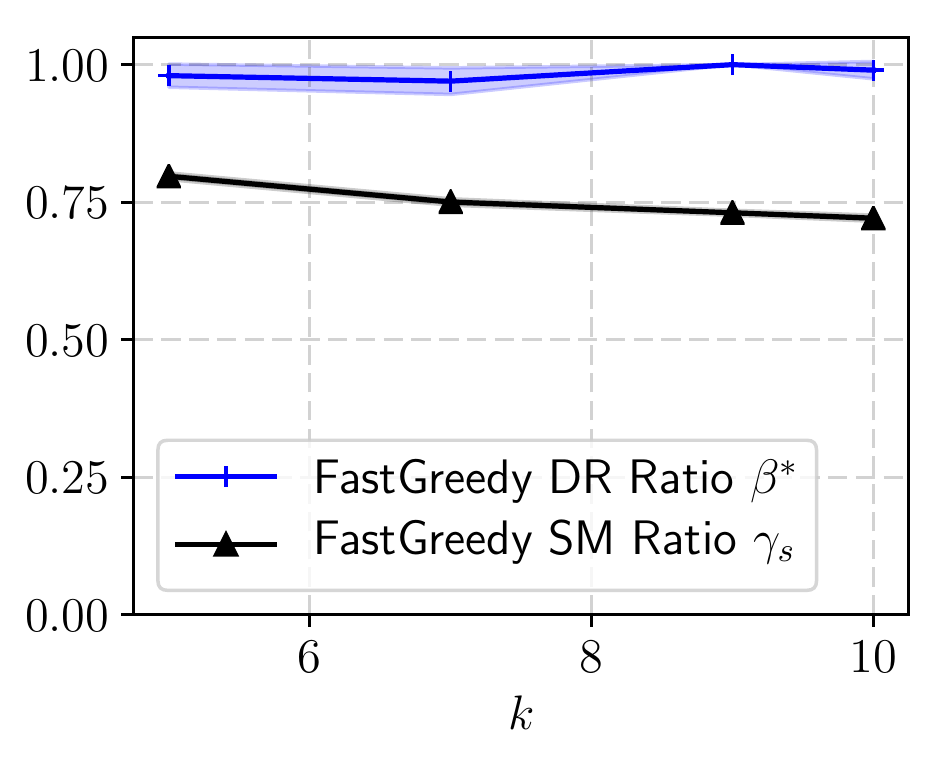}\label{fig:gamma}}
  \caption{\textbf{(a):} The value of FastGreedy DR ratio $\beta^*$ on the GrQc dataset. \textbf{(b):} FastGreedy submodularity ratio $\gamma_s$ and FastGreedy DR Ratio $\beta^*$ on a small, random BA network.} \label{fig:gamma-beta}
\end{figure}
In this section, we demonstrate the following: (1) our algorithms exhibit
virtually identical quality of solution with StandardGreedy,
(2) our algorithms query the function much fewer times, which
leads to dramatic runtime improvement over StandardGreedy, (3) FastGreedy further
reduces the number of queries of ThresholdGreedy while sacrificing
little in solution quality, and (4) the non-submodularity parameters
on a small instance are computed, which provides evidence that our
theoretical performance ratios are useful.
\paragraph{Quality of Solution}
In Fig. \ref{fig:grqc-qos}, we plot $\mathbb{A}(\vec g)$ for the
solution returned by each algorithm on the GrQc network with $10$ incentive
levels; the difference in quality of solution returned by the three algorithms 
is negligible. In Fig. \ref{fig:fb-qos}, we plot the same for the Facebook
network with $100$ incentive levels; on Facebook, we drop StandardGreedy due to its
prohibitive runtime. FastGreedy is observed to lose a small (up to 3\%)
factor, which we consider acceptable in light of its large runtime improvement,
which we discuss next.
\paragraph{Number of Queries}
Next, we present in Fig. \ref{fig:queries} the
number of function queries\footnote{Our implementation is in terms
  of the marginal gain. The number of function queries shown is
  the number of times the marginal gain function was called.} 
each algorithm requires 
on the GrQc and Facebook networks.
StandardGreedy required up to 20M queries on Facebook,
hence it is not shown in Fig. \ref{fig:q-fb}. 
Both of our algorithms provide a large improvement over
StandardGreedy; in particular, notice that StandardGreedy increases
linearly with $k$, while both of the others exhibit
logarithmic increase in agreement with the theoretical
query complexity of each. Furthermore, FastGreedy uses at least 14.5\% fewer
function queries than ThresholdGreedy and up to 43\% fewer as $k$ grows.

In terms of CPU runtime, we show in Fig.
\ref{fig:runtime} that FastGreedy significantly outperforms
ThresholdGreedy; hence, the runtime appears to be dominated
by the number of function queries as expected. 
\paragraph{Non-Submodularity Parameters}
The value of the FastGreedy DR ratio 
$\beta^*$ on GrQc is shown in Fig. \ref{fig:beta};
notice that it is relatively stable as the budget increases from $K=20$ to $100$, although
there is substantial drop from $10$ incentive levels to $100$; this may be explained as an increase 
in the non-submodularity
resulting from inaccurate sampling of $\mathbb A$, since it is more difficult to detect 
differences between the finer levels. Still, on all instances tested,
$\beta^* > 0.6$, which suggests the worst-case performance ratio of FastGreedy is not far from
that of StandardGreedy.

Finally, we examine the various non-submodularity parameters on a very small instance which admits
their computation: a random Barabasi-Albert network with $10$ nodes and $10$ incentive levels. 
We compute the FastGreedy version of the submodularity ratio $\gamma_s$ defined
in Appendix \ref{apx:nonsm-defs}
by direct enumeration and consider the FastGreedy DR ratio $\beta^*$.
Results are shown in Fig. \ref{fig:gamma}. The value of $\beta^*$ is close to $1$ and remains
constant with increasing budget $k$, while
the FastGreedy submodularity ratio decreases slowly with $k$.
With $\beta^*$ and the FastGreedy $\gamma_s$, we can compute the
worst-case performance ratio of FastGreedy across these instances:
$0.449692$.


\section{Conclusions}\label{sect:conclusion}
In this work, we provide two approximation algorithms for
maximizing non-submodular functions with respect to a cardinality
constraint on the integer lattice with polynomial query complexity. 
Since set functions are a special case, our work provides faster algorithms for the same problem with
set functions than the standard greedy algorithm, although the
performance ratio degrades from at least $1 - e^{-\gamma_s}$ to $1 - e^{-\beta^*\gamma_s}$,
where $\beta^*$ is the FastGreedy DR Ratio.
We propose a natural application of non-submodular influence 
maximization, for which we lower bound the relevant non-submodularity parameters
and validate our algorithms.
\clearpage
\bibliography{mendeley,exp}
\bibliographystyle{plainnat}
\onecolumn
\clearpage
\appendix
\section{Organization of the Appendix}
Appendix \ref{apx:nonsm-defs} defines the greedy versions of the non-submodularity parameters.

Appendix \ref{apx:dr-ratio} provides omitted proofs from Section \ref{sect:dr-ratio}.

Appendix \ref{apx:bspivot} defines the BinarySearchPivot procedure and omitted proofs
from Section \ref{sect:threshold-greedy}.

Appendix \ref{apx:fg} provides omitted proofs from Section \ref{sect:fg}.

Appendix \ref{apx:im} defines the Independent Cascade model, proves that classical IM and boosting are subproblems of our IM model, and provides the proof of Theorem \ref{thm:dr-lower-bound} from Section \ref{sect:IM}.

Appendix \ref{apx:experiments} provides additional experimental results characterizing the parameters of FastGreedy.

Appendix \ref{apx:implementation} presents details of our GIM implementation.
\section{Greedy Versions of Non-Submodularity Parameters}
\label{apx:nonsm-defs}
We define various greedy versions of the non-submodularity parameters
in this section.
In this work, these
are referred to as FastGreedy submodularity ratio, \textit{etc.}, 
where the instance is clear from the context.
\paragraph{ThresholdGreedy DR ratio.}
\begin{deff}[ThresholdGreedy DR ratio]
  Let an instance $\mathcal I$ of Problem \ref{prob:max} be given,
  with budget constraint $k$. 
  Let $\vec g^1, \ldots, \vec g^T$ be the sequence of values $\vec g$ takes during
  execution of ThresholdGreedy on $\mathcal I$. 
  The \textit{ThresholdGreedy version of the DR ratio  on $\mathcal I$}
  $\gamma_d^{\text{TG}, \mathcal I}(f) \in [0,1]$, is the maximum value such that
  for any $i \in \{1, \ldots, T\}$, for any $s \in S$,
  if $\vec g^{i,s}$ is the value of the greedy vector immediately after
  $s$ was considered during the inner \textbf{for} loop of the threshold
  directly preceding the one in which $\vec g^i$ was considered ($\vec g^{i,s} = \vec 0$ if $\vec g^i$
  was considered during the first threshold),
  \[ \gamma_d^{\text{TG}, \mathcal I} \delta_{\vec s} (\vec g^i ) \le
    \delta_{\vec s} (\vec g^{i,s}).\]
\end{deff}
\paragraph{Greedy versions of submodularity ratio.}
\begin{deff}
  Let $\mathcal A \in \{ \text{StandardGreedy}, \text{ThresholdGreedy} \}$, 
  and let an instance $\mathcal I$ of Problem \ref{prob:max} be given,
  with budget constraint $k$. 
  Let $\vec g^1, \ldots, \vec g^T$ be the sequence of values $\vec g$ takes during
  execution of $\mathcal A$ on $\mathcal I$. 
  The \textit{$\mathcal A$ version of the submodularity ratio  on $\mathcal I$}
  $\gamma_s^{\mathcal A, \mathcal I} \in [0,1]$, is the maximum value such that
  for any $s \in S$, for any $i \in \{1, \ldots, T\}$, for any $\vec w$ such that $\vec g^i \le \vec w$ and
  $\lone{\vec w - \vec g^i} \le k$ and $\vec w \leq \vec b$,
  \[ \gamma_s^{\mathcal A, \mathcal I} \left[ f( \vec w) -
    f( \vec g^i) \right] \le \sum_{s \in \{ \vec w - \vec g^i \} }
  \delta_{\vec s} ( \vec g^i ). \]
\end{deff}
The \emph{FastGreedy submodularity ratio} differs from the above two only in that the sequence
of vectors $\vec g^1, \ldots, \vec g^T$ are the value of the greedy vector $\vec g$ at
the beginning of each iteration of the outer \textbf{while} loop, instead of all values
of $\vec g$ during execution of the algorithm.
\section{Proofs for Section \ref{sect:dr-ratio}} \label{apx:dr-ratio}
\begin{proof}[Proof of Proposition \ref{prop:sm-ratio}]
  Suppose $\vec v \le \vec w \in \lattice$.  
  Let $\{\vec w - \vec v \} = \{s_1,...,s_l\}$. Then,
  \begin{align*}
    \gamma_d(f(\vec w) - f(\vec v)) & =
    \gamma_d\sum_{i=1}^l[f(\vec v + \vec s_1 + ... + \vec s_i)
      - f(\vec v + \vec s_1 + ... + \vec s_{i-1})] \\
      & = \gamma_d \sum_{i = 1}^l \delta_{\vec s_i}(\vec v + \vec s_1 + ... + \vec s_{i-1}) \\
      & \leq \sum_{i=1}^l \delta_{\vec s_i}(\vec v)
  \end{align*}
  Therefore, $\gamma_d \leq \gamma_s$, since $\gamma_s$ is the maximum number satisfying
  the above inequality.
\end{proof}

\section{BinarySearchPivot and Proofs for Section \ref{sect:threshold-greedy} (ThresholdGreedy)} \label{apx:bspivot}
\paragraph{BinarySearchPivot.}
\begin{algorithm}[tb]
  \caption{BinarySearchPivot$(f, \vec g, \vec b, s, k, \tau)$}\label{alg:bspivot}
  \begin{algorithmic}[1]
    \STATE {\bfseries Input:} $f \in \funcs_{\vec b}$, $\vec g \in \lattice$, $\vec b \in \lattice$, 
    $s \in S$, $k \in \nats$, $\tau \in \reals$
    \STATE {\bfseries Output:} $l \in \mathbb{N}$
    \STATE $l_s \gets 1, l_t \gets \min \{ \vec b_s - \vec g_s, k - \lone{\vec g} \}$,
    \IF{ $\delta_{l_t \vec s}( \vec g ) \ge l_t\tau$\label{line:fp-max}}
    \STATE \textbf{return} $l_t$\label{fp:max}
    \ENDIF
    \IF{$\delta_{\vec s} ( \vec g ) < \tau$}
    \STATE \textbf{return} $0$\label{fp:0}
    \ENDIF
    \WHILE { $l_t \neq l_s + 1$ }
    \STATE $m = \lfloor (l_t + l_s) / 2 \rfloor$
    \IF{$\delta_{m \vec s} ( \vec g ) \ge m \tau$}
    \STATE $l_s = m$
    \ELSE
    \STATE $l_t = m$
    \ENDIF
    \ENDWHILE
    \STATE \textbf{return } $l_s$
\end{algorithmic}
\end{algorithm}
The routine BinarySearchPivot (Alg. \ref{alg:bspivot}) efficiently finds a pivot for each
$s \in S$.
BinarySearchPivot
 uses a modified binary-search procedure that maintains $l_s <l_t$
such that both
\begin{align}
  \delta_{l_s \vec s}( \vec g) &\ge l_s \tau, \label{ineq:1}\\
  \delta_{l_t \vec s}(\vec g) &< l_t \tau. \label{ineq:2}
\end{align}
Initially, $l_s$ and $l_t$ do satisfy (\ref{ineq:1}), (\ref{ineq:2}), 
or else we have already found a valid pivot (lines \ref{fp:max}, \ref{fp:0}). 
The midpoint $m$ of the interval $[l_s, l_t]$ is tested to determine if
$l_s$ or $l_t$ should be updated to maintain (\ref{ineq:1}), (\ref{ineq:2});
this process continues until $l_t = l_s + 1$.
\begin{lemma} \label{lemm:pivot}
  BinarySearchPivot finds a valid pivot $l \in \{0, \ldots, l_{max} \}$
  in $O( \log l_{max} )$ queries of $f$, where
  $l_{max} = \min \{ \vec b_s - \vec g_s, k - \lone{\vec g} \}$,
  $\vec b_{max} = \max_{s \in S} \vec b_s$.
\end{lemma}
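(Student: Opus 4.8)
The plan is to prove two things separately: that the value $l$ returned by BinarySearchPivot is always a valid pivot (satisfying both (\ref{prop:1}) and (\ref{prop:2})), and that the routine uses only $O(\log l_{max})$ queries of $f$. I would first dispatch the two early-return branches. If the routine returns at line \ref{fp:max}, then $l = l_t = l_{max}$ and the guard gives exactly $\delta_{l_{max}\vec s}(\vec g) \ge l_{max}\tau$, i.e. property (\ref{prop:1}); by the footnote convention, which sets $\delta_{\vec s}(\vec g + l_{max}\vec s) = 0$, property (\ref{prop:2}) holds as well, so $l_{max}$ is a pivot. If the routine returns $0$ at line \ref{fp:0}, then $\delta_{\vec s}(\vec g) < \tau$ is precisely (\ref{prop:2}) for $l = 0$, while (\ref{prop:1}) for $l = 0$ is the vacuous inequality $0 \ge 0$; hence $0$ is a pivot.

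The main work is the binary-search branch. I would argue, via a loop invariant, that throughout the \textbf{while} loop the two inequalities (\ref{ineq:1}) and (\ref{ineq:2}) hold for the current $l_s < l_t$. Initially this is the case: having passed the first guard gives $\delta_{l_t\vec s}(\vec g) < l_t\tau$ for $l_t = l_{max}$, and having passed the second guard gives $\delta_{\vec s}(\vec g) \ge \tau$, which is (\ref{ineq:1}) for $l_s = 1$. Each iteration tests the midpoint $m$ with $l_s < m < l_t$; if $\delta_{m\vec s}(\vec g) \ge m\tau$ it assigns $l_s \gets m$ (preserving (\ref{ineq:1})), and otherwise assigns $l_t \gets m$ (preserving (\ref{ineq:2})). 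Since the gap $l_t - l_s$ strictly decreases each step, the loop terminates with $l_t = l_s + 1$. I would emphasize that \emph{no} monotonicity of the average gain in $m$ is used, so the argument does not rely on DR-submodularity; the invariant is purely a consequence of how $l_s$ and $l_t$ are updated. This is the point at which the non-submodular setting differs from the DR-submodular binary search, which targets the maximal $l$.

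The crux is verifying (\ref{prop:2}) for the returned value $l_s$ at termination, for which I would use the telescoping identity $\delta_{\vec s}(\vec g + l_s\vec s) = \delta_{(l_s + 1)\vec s}(\vec g) - \delta_{l_s\vec s}(\vec g)$. Since $l_t = l_s + 1$, invariant (\ref{ineq:2}) gives $\delta_{(l_s+1)\vec s}(\vec g) < (l_s + 1)\tau$ and invariant (\ref{ineq:1}) gives $\delta_{l_s\vec s}(\vec g) \ge l_s\tau$; subtracting yields $\delta_{\vec s}(\vec g + l_s\vec s) < (l_s + 1)\tau - l_s\tau = \tau$, which is (\ref{prop:2}), while (\ref{ineq:1}) is exactly (\ref{prop:1}). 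Thus $l_s$ is a valid pivot. For the query count, each marginal $\delta_{m\vec s}(\vec g)$ costs $O(1)$ queries of $f$, the two guards cost $O(1)$ queries, and the loop halves the interval $[1, l_{max}]$ at each step, giving $O(\log l_{max})$ iterations and hence $O(\log l_{max})$ total queries. I expect the only genuinely delicate point to be the combination of the two invariants through the telescoping identity to certify (\ref{prop:2}); everything else is routine bookkeeping.
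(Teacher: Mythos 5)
Your proof is correct and takes essentially the same approach as the paper's: both rest on maintaining the invariants (\ref{ineq:1})--(\ref{ineq:2}) through the binary search and on the telescoping identity $\delta_{(l_s+1)\vec s}(\vec g) = \delta_{l_s \vec s}(\vec g) + \delta_{\vec s}(\vec g + l_s \vec s)$, which combined with the two invariants certifies properties (\ref{prop:1}) and (\ref{prop:2}) for the returned value. The only difference is organizational: the paper phrases the final step as an existence claim (some pivot lies in $\{l_s,\ldots,l_t-1\}$) proved by a minimal-index contradiction, whereas you verify the terminal case $l_t = l_s+1$ directly; your explicit handling of the two early-return branches likewise matches the paper's remarks preceding the lemma.
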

\begin{proof}[Proof of Lemma \ref{lemm:pivot}]
  The routine BinarySearchPivot 
  maintains inequalities (\ref{ineq:1}), (\ref{ineq:2}), 
  it is enough to show that given (\ref{ineq:1}), (\ref{ineq:2}), 
  there exists a
  $l \in \{l_s, ..., l_t - 1 \}$ such that $l$ is a pivot.
  Consider $l_j = l_t - j$, for $j \in \nats$; there must be a smallest
  $j \ge 1$ such that $l_j$ satisfies property (\ref{prop:1}) of being
  a pivot, since $l_s < l_t$ satisfies property (\ref{prop:1}). 
  If property (\ref{prop:2}) is unsatisfied, then
  \begin{align*}
    \delta_{(l_j + 1)e_i} ( s ) &= \delta_i( s + l_j ) + \delta_{l_je_i} ( s ) \\
    &\ge \tau + l_j\tau = (l_j + 1)\tau,
  \end{align*}
  contradicting the choice of $j$ since $l_j + 1 = l_{j - 1}$.
  The query complexity follows from a constant number of queries per
  iteration of the while loop and the fact that each iteration reduces
  the distance from $l_s$ to $l_t$ by a factor of 2; initially, this
  distance was $l_{max}$.
\end{proof}
\paragraph{Omitted proofs from Section \ref{sect:threshold-greedy}.}
\begin{proof}[Proof that Property \ref{lemm:exist} holds]
  Let $\vec g^{\tau,s}$ be the value of $\vec g$ immediately after 
  $s$ is considered during the iteration corresponding to $\tau$; 
  then property (\ref{prop:2}) of pivot was satisfied: $\delta_{\vec s}( \vec g^{\tau,s} ) < \tau$.
\end{proof}
\begin{proof}[Proof of Claim \ref{lemm:wlog}]
  Suppose
  $\gamma_d \ge \epsi$.
  Suppose
  $\lone{\vec g} < k$, and let $\vec g'$ be the solution returned by a modified
  ThresholdGreedy that continues updating the threshold until
  $\lone{\vec g'} = k$. Order $\{ \vec g' \} \setminus \{ \vec g \} = \{ \vec s_1, \ldots, \vec s_\ell \}$,
  and let $\vec g'_i = \vec g'_{i - 1} + \vec s_i$, $i = 1, \ldots, \ell$, with
  $\vec g'_0 = \vec g$, so that $\vec g'_\ell = \vec g'$. Also, let $\vec g_{i-1} \le \vec g$ 
  be the vector guaranteed for $\vec s_i$ 
  by Lemma \ref{lemm:exist} with the last threshold value
  $\tau$ of ThresholdGreedy.
  Then
  \begin{align*} 
    f(\vec g') - f(\vec g) &= \sum_{i=1}^\ell \delta_{\vec s_i} ( \vec g'_{i - 1} ) \\
                           &\le \frac{1}{\gamma_d} \sum_{i = 1}^\ell \delta_{\vec s_i} (\vec g_{i - 1} ) \\
                           &\le \frac{\ell}{\epsi} \frac{\epsi ^2M}{k} \le \epsi M \le \epsi f( \vec \Omega ).
  \end{align*}
  Hence, for any $\Phi > \epsi$, if 
  $$f( \vec g' ) \ge \Phi f( \vec \Omega ),$$
  then 
  $$f( \vec g ) \ge ( \Phi - \epsi ) f( \vec \Omega ).$$
\end{proof}

\paragraph{From proof of Theorem \ref{thm:threshold}:}

\paragraph{``If $\gamma_d < \epsi$, the ratio holds trivially''.}
If $\gamma_d < \epsi$, the ratio holds trivially
  from the 
  inequality $1 - e^{-x} \le x$, for real $x > 0$, 
  since 
  $$1 - e^{-\gamma_d\gamma_s\kappa} \le \gamma_d \gamma_s \kappa < \epsi.$$

\paragraph{``from which the hypothesis of Claim \ref{lemm:wlog} follows''.}
Since $(1 - x) \le e^{-x}$ and $\sum_{t} l^t = k$, we have
  $\prod_{t = 1}^T (1 - l^t \gamma_d \gamma_s \kappa / k ) \le \prod_{t = 1}^T \exp ( (- l^t \gamma_d\gamma_s \kappa / k ) ) = \exp ( - \gamma_d \gamma_s \kappa )$.

\begin{proof}[Proof of Corollary \ref{cor:thresh-par}]
  As in proof of Theorem \ref{thm:threshold}, suppose $\gamma_d \ge \epsi$.
  Claim \ref{lemm:wlog} still holds as before. Now, let $\vec g^t$ be the value
  of $\vec g$ at the beginning of the $t$th iteration of the outer \textbf{for} loop with
  threshold value $\tau_t$.
  Since the inner \textbf{for} loop is conducted in parallel, all marginal gains in
  iteration $t$ are considered with respect to $\vec g^t$. Order the vectors added 
  in this iteration $l_1\vec s_1, \ldots, l_\ell \vec s_\ell$; because each $l_i$ is a pivot,
  we know $\delta_{l_i\vec s_i}( \vec g^t) \ge l_i \tau_t$ and $\delta_{\vec s_i}(\vec g^t + l_i \vec s_i) < \tau_t$.

  Let $\vec g_i^t = \vec g_{i -1}^t + l_i \vec s_i$, so $\vec g_0^t = \vec g^t$ and $\vec g_\ell^t = \vec g^{t + 1}$. Now for each $i$ and for each $s \in S$, there exists a vector $\vec h_i^s \le \vec g^t$ such that $\delta_{\vec s}( \vec h_i^s ) < \tau_t / \kappa$ (namely $\vec h_i^s = \vec g^{t-1} + l^*\vec s_i$, from when $\vec s_i$ was considered during the previous iteration $t - 1$, or $\vec h_i^s = \vec 0$ if $t = 1$ is the first iteration). Furthermore $\vec g^t \le \vec g_i^t$ and $\delta_{\vec l_i s_i}( \vec g^t ) \ge l_i \tau_t$. Hence $$ \delta_{l_i \vec s_i}( \vec g_i^t ) \ge (1 - \alpha) \delta_{l_i \vec s_i}( \vec g^t ) \ge (1 - \alpha)l_i \tau_t \ge \kappa (1 - \alpha) l_i \delta_{\vec s}( \vec h_i^s) \ge \kappa (1 - \alpha) l_i \gamma_d \delta_{\vec s}( \vec g_i^t ), $$
for any $s \in S$. The preceding argument proves an analogue of Claim \ref{lemm:marge}, and
the argument from here is exactly analogous to the proof of Theorem \ref{thm:threshold}.
\end{proof}
\section{Proofs for Section \ref{sect:fg}} \label{apx:fg}
\begin{proof}[Proof of Theorem \ref{thm:fg-ratio}]
Since we have included $\gamma_d \ge \epsi$ as a hypothesis,
we have the following claim, analogous to Claim \ref{lemm:wlog}.
  \begin{claim} \label{claim:fg-wlog}
    If $\vec g$ is produced by the modified version of FastGreedy that
    continues until $\lone{ \vec g } = k$, and
    $f( \vec g ) \ge (1 - e^{-\kappa \beta^* \gamma_s})f( \vec \Omega )$,
    then the Theorem is proved.
  \end{claim}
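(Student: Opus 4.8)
The plan is to mirror the proof of Claim \ref{lemm:wlog} for ThresholdGreedy: reduce the additive $-\epsi$ loss to a bound on the objective gap between the two runs. To keep the notation of Theorem \ref{thm:fg-ratio} and the Claim separate, write $\vec g$ for the output of the unmodified FastGreedy (the vector in Theorem \ref{thm:fg-ratio}) and $\vec g'$ for the output of the modified version that continues until $\lone{\vec g'} = k$ (the vector named $\vec g$ in the Claim). If the unmodified run already halts with $\lone{\vec g} = k$, then $\vec g = \vec g'$ and there is nothing to prove; so assume it exits through the \textbf{while} loop with $\lone{\vec g} < k$. Since the threshold schedule does not depend on $\epsi$, the two runs agree up to this exit, so $\vec g \le \vec g'$. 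The goal is to establish $f(\vec g') - f(\vec g) < \epsi f(\vec \Omega)$, after which the hypothesis $f(\vec g') \ge (1 - e^{-\kappa \beta^* \gamma_s}) f(\vec \Omega)$ immediately gives $f(\vec g) \ge (1 - e^{-\kappa \beta^* \gamma_s} - \epsi) f(\vec \Omega)$, which is the conclusion of the Theorem.

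The first substantive step is to bound the threshold used in the final executed iteration of the unmodified run. The \textbf{while} loop exits because the value $m$ computed on line \ref{line:greedy} of that iteration satisfies $m < M \epsi^2 / k$; since the threshold there is $\tau = \beta \kappa m$ with $\beta \le 1$ and $\kappa \in (0,1)$, we obtain $\tau < M \epsi^2 / k$. Because $\lone{\vec g} < k$, a valid pivot was found for every $s \in S$ during that iteration, so property (\ref{prop:2}) supplies, for each $s$, a vector $\vec g^{s} \le \vec g$ with $\delta_{\vec s}(\vec g^{s}) < \tau$; this is the FastGreedy analogue of Property \ref{lemm:exist}.

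Next I would order $\{\vec g'\} \setminus \{\vec g\} = \{\vec s_1, \ldots, \vec s_\ell\}$ and set $\vec g'_0 = \vec g$, $\vec g'_i = \vec g'_{i-1} + \vec s_i$, so that $f(\vec g') - f(\vec g) = \sum_{i=1}^\ell \delta_{\vec s_i}(\vec g'_{i-1})$. For each $i$, since $\vec g^{s_i} \le \vec g \le \vec g'_{i-1}$, the DR ratio yields $\gamma_d \, \delta_{\vec s_i}(\vec g'_{i-1}) \le \delta_{\vec s_i}(\vec g^{s_i}) < \tau < M\epsi^2/k$, hence $\delta_{\vec s_i}(\vec g'_{i-1}) < M\epsi^2 / (\gamma_d k) \le M\epsi/k$, using the standing assumption $\gamma_d \ge \epsi$. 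Summing over the $\ell \le k$ terms and using $M = \max_{s} f(\vec s) \le f(\vec \Omega)$ gives $f(\vec g') - f(\vec g) < M\epsi \le \epsi f(\vec \Omega)$, as needed.

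The main obstacle, and the only real departure from the ThresholdGreedy argument, is correctly reading off $\tau < M\epsi^2/k$ from FastGreedy's termination: the threshold is tied to the adaptively shrinking quantity $m$ and the factor $\beta\kappa$ rather than to an explicit geometric schedule, so one must verify that the \textbf{while} exit condition on $m$ propagates to the threshold and that the per-coordinate witnesses $\vec g^{s}$ furnished by the pivot property (\ref{prop:2}) all lie below the final $\vec g$. Once this is settled, the single DR-ratio application and the $\ell \le k$, $M \le f(\vec \Omega)$ estimates are routine.
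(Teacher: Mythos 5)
Your proof is correct and takes essentially the same route as the paper's: the identical telescoping decomposition of $f(\vec g') - f(\vec g)$ over $\{\vec g'\}\setminus\{\vec g\}$, a single DR-ratio application down to a vector below the final $\vec g$, the termination bound $\tau < M\epsi^2/k$, and the closing estimates $\gamma_d \ge \epsi$, $\ell \le k$, $M \le f(\vec \Omega)$. If anything, your step through the property (\ref{prop:2}) pivot witnesses $\vec g^{s} \le \vec g$ of the final executed iteration is slightly more careful than the paper's own write-up (which bounds $\delta_{\vec s_i}(\vec g)$ directly by $\epsi^2 M/k$), and it matches how the paper proves the ThresholdGreedy analogue, Claim \ref{lemm:wlog}, via Property \ref{lemm:exist}.
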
  
\begin{proof}
 Suppose
  $\lone{\vec g} < k$, and let $\vec g'$ be the solution returned by a FastGreedy*
  which continues updating the threshold until
  $\lone{\vec g'} = k$. Order $\{ \vec g' \} \setminus \{ \vec g \} = \{ \vec s_1, \ldots, \vec s_\ell \}$,
  and let $\vec g'_i = \vec g'_{i - 1} + \vec s_i$, $i = 1, \ldots, \ell$, with
  $\vec g'_0 = \vec g$, so that $\vec g'_\ell = \vec g'$. 
  Then
  \begin{align*} 
    f(\vec g') - f(\vec g) &= \sum_{i=1}^\ell \delta_{\vec s_i} ( \vec g'_{i - 1} ) \\
                           &\le \frac{1}{\gamma_d} \sum_{i = 1}^\ell \delta_{\vec s_i} (\vec g ) \\
                           &\le \frac{\ell}{\epsi} \frac{\epsi ^2M}{k} \le \epsi M \le \epsi f( \vec \Omega ).
  \end{align*}
  Hence, for any $\Phi > \epsi$, if 
  $$f( \vec g' ) \ge \Phi f( \vec \Omega ),$$
  then 
  $$f( \vec g ) \ge ( \Phi - \epsi ) f( \vec \Omega ).$$ 
\end{proof}
Thus, for the rest of the proof, let $\vec g$ be produced by the modified version of
FastGreedy as in the hypothesis of Claim \ref{claim:fg-wlog}.
  Let $\vec s^t \in S$, $\vec g^t$ be the direction maximizing the marginal gain on 
  line \ref{line:greedy}, the solution $\vec g$ 
  immediately after the $t$th iteration of the \textbf{while} loop, respectively.
  By the choice of $\vec s^t$, for each $s \in \{ \vec \Omega \}$, 
  we have $\delta_{\vec s} ( \vec g^{t - 1} ) \le \delta_{\vec s^t} ( \vec g^{ t - 1 } )$.
  Let $l_1\vec s_1',\ldots,l_\ell \vec s_\ell'$ be the additions 
  on line \ref{line:greedyAdd} to the solution $\vec g$ during iteration $t$, 
  with each $l_m > 0$ for $m = 1, \ldots, \ell$.
  Let $\vec g^{t-1}_0 = \vec g^{t-1}$ and $\vec g^{t-1}_m = \vec g^{t-1}_{m-1} + l_m\vec s_m'$. 
  Let $L_t = \sum_{m=1}^\ell l_m$.
  Now, $l_m$ was chosen by BinarySearchPivot and hence satisfies
  $\delta_{l_m\vec s_m'} ( \vec g^{t - 1}_{m-1} ) \ge l_m \beta \kappa \delta_{\vec s^t} ( \vec g^{t - 1} )$ by property (\ref{prop:1}) of pivot and the choice of threshold $\tau = \beta \kappa m$.
  So 
\begin{align*}
    f( \vec g^t ) - f( \vec g^{t-1} ) &= \sum_{m=1}^\ell f( \vec g^{t-1}_m ) - f( \vec g^{t-1}_{m-1}) \\ 
                            &\ge \sum_{m=1}^\ell l_m \beta \kappa \delta_{\vec s^t} ( \vec g^{t-1} )\\
                            &= L_t \beta \kappa \delta_{\vec s^t} ( \vec g^{t-1} )\\
                            &\ge \frac{L_t \beta^*\kappa}{k} \sum_{s \in \{ \vec \Omega - (\vec g^{t-1} \wedge \vec \Omega ) \}} \delta_s ( \vec g^{t - 1} ) \\
    &\ge \frac{ L_t \beta^* \gamma_s \kappa }{k} \left( f( \vec \Omega ) - f( \vec g^{t-1} ) \right),\\
  \end{align*}
  where the first inequality is by definition of $\vec g_m^t$, the first
  inequality is by the preceding paragraph, the second equality is by definition
  of $L_t$, the second inequality is by the selection of $\vec s^t$ and that 
  fact $\lone{ \vec \Omega } \le k$, and the third inequality is by the definition
  of submodularity ratio and the lattice identity $\vec v \vee \vec w - \vec v = \vec w - \vec v \wedge \vec w$.
  From here,
  \[ f( \vec g ) \ge \left( 1 - \prod_{t=1}^T \left( 1 - \frac{L_t \beta^* \gamma_s \kappa}{k} \right) \right) f( \vec \Omega ), \]
  from which the hypothesis of Claim \ref{claim:fg-wlog} follows:
  since $(1 - x) \le e^{-x}$ and $\sum_{t} L^t = k$, we have
  $\prod_{t = 1}^T (1 - L^t \beta^* \gamma_s \kappa / k ) \le \prod_{t = 1}^T \exp ( (- L^t \beta^*\gamma_s \kappa / k ) ) = \exp ( - \beta^* \gamma_s \kappa )$.
\end{proof}
  \begin{proof}[Proof of Claim \ref{claim:fast}] 
  For any $i$, $m'_i \le m'_{i - 1} / \gamma_d$: to see this, observe
  $m'_i = \max_{s \in S} \delta_{\vec s} ( \vec g^i )$,
  $m'_{i - 1} = \max_{s \in S} \delta_{\vec s} ( \vec g^{i - 1} )$,
  for some $\vec g^{i - 1} \le \vec g^i$. For each $s \in S$,
  $\delta_{\vec s} ( \vec g^{i - 1} ) \le m'_{i - 1}$, so 
  $\delta_{\vec s} ( \vec g^{i } ) \le m'_{i - 1} / \gamma_d$,
  and hence so is $m_i'$.
  Since
  $j_\ell$ is the last uptick in the sequence before the deletion, we know for 
  every $i > j_\ell$, $m'_i \le \kappa m'_{i-1}$. Hence the deleted sequence
  proceeds from $m'_{j_\ell + 1} \le m'_\ell / \gamma_d$ down to 
  $m'_{k_\ell - 1} \ge \kappa m'_{j_\ell}$ by definition of $m'_{k_\ell}$, with 
  each term decreasing by
  a factor of at least $\kappa$. 
  \end{proof}
\section{Influence Maximization: A General Framework} \label{apx:im}
\paragraph{``explicit formula for $p^{\vec x}(H,T)$''.}
$p^\vec x (H,T) = Pr ( H | \vec x ) Pr ( T | \vec x )$, with
\begin{align*}
  Pr \left(H | \vec x \right) &= \prod_{(x,y) \in E} p(x,y, \vec x_y)^{\ind{(x,y) \in H}}( 1 - p(x,y, \vec x_y))^{\ind{(x,y) \not \in H}}, \\
  Pr \left(T | \vec x \right) &= \prod_{y \in V} p(y,\vec x_y)^{\ind{y \in T}}( 1 - p(y, \vec x_y))^{\ind{y \not \in T}}.
\end{align*}

\paragraph{The Independent Cascade (IC) Model.}
The IC model is defined as follows. Given a graph $G = (V,E)$, with probabilities $p(e)$
associated to each edge $e \in E$. Let $H$ be a realization of $G$, where each edge $e$
is included in $H$ with probability $p(e)$. Then, from an initial seed set $T$ of
activated users, a user is activated if it is reachable in $H$ from $T$. Intuitively,
the weight on edge $(u,v)$ represents the probability that $u$ activates $v$ (\textit{i.e} user $u$ convinces $v$ to adopt the product).  For more information, we refer the reader to \citet{Kempe2003}.
\begin{proposition} \label{prop:im}
  There is a natural one-to-one correspondence between
  instances of the IM problem under IC model
  and a subclass of instances of GIM.
\end{proposition}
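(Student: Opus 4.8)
The plan is to exhibit an explicit map $\Phi$ from instances of classical IM under the IC model into the GIM instances, to identify the image of $\Phi$ as the promised subclass, and to verify that $\Phi$ is a bijection onto that subclass under which feasible solutions and objective values are preserved. First I would define $\Phi$: given an IM instance consisting of a graph $G=(V,E)$, edge probabilities $p(e)$ for $e\in E$, and budget $k$, I produce the GIM instance on the same $G$ with the same budget $k$, with a single incentive level per node ($L=1$, box $\vec b = \vec 1$), and with the mappings
\[ p(u,0) = 0,\quad p(u,1) = 1,\quad p(u,v,i) = p((u,v))\ \text{ for } i \in \{0,1\}. \]
In words, each node carries one incentive that, if paid, seeds the node deterministically and otherwise leaves it inactive, while the edge weights are held constant in the incentive level so that the boosting mechanism is switched off. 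I define the target subclass $\mathcal G_0$ to be exactly the GIM instances of this form: two incentive levels with $p(u,0)=0$, $p(u,1)=1$, and edge weights independent of $i$.

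Next I would check that $\Phi$ is a bijection between IM instances and $\mathcal G_0$. Injectivity is immediate, since $G$, $k$, and the (incentive-independent) edge probabilities $p(u,v,i)=p((u,v))$ can be read directly off the GIM instance. Surjectivity holds because any instance of $\mathcal G_0$ is visibly the image under $\Phi$ of the IM instance obtained by reading off $G$, $k$, and those edge probabilities. This gives the claimed one-to-one correspondence at the level of instances.

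The substantive step, which I expect to require the most care, is showing that $\Phi$ carries the GIM optimization problem onto the IM optimization problem. I would unpack $\mathbb{I}(\vec x) = \sum_{T \subseteq V}\sum_{H \subseteq G} p^{\vec x}(H,T) R(H,T)$ using $p^{\vec x}(H,T) = Pr(H\mid\vec x)\,Pr(T\mid\vec x)$. Under $\Phi$ we have $p(y,\vec x_y) = \vec x_y \in \{0,1\}$, so the factor $Pr(T\mid\vec x)$ collapses to the indicator that $T = T_{\vec x} := \{y : \vec x_y = 1\}$, making the seed set deterministic; and since $p(u,v,i)$ is constant in $i$, the factor $Pr(H\mid\vec x)$ reduces to the classical IC realization probability $Pr(H)$, independent of $\vec x$. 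Substituting yields $\mathbb{I}(\vec x) = \sum_H Pr(H)\, R(H, T_{\vec x})$, which is precisely the classical expected influence spread $\sigma(T_{\vec x})$. Because $\mathbb{I}(\vec 0) = \sigma(\emptyset) = 0$, this gives $\mathbb{A}(\vec x) = \sigma(T_{\vec x})$.

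Finally I would match the feasible regions. Since $L=1$, every feasible incentive vector already lies in $\{0,1\}^V$, so nothing is lost by restricting attention there, and $\lone{\vec x} \le k$ is equivalent to $|T_{\vec x}| \le k$. Hence $\vec x \mapsto T_{\vec x}$ is a bijection between GIM-feasible vectors and IM-feasible seed sets of size at most $k$ that preserves the objective, so maximizing $\mathbb{A}$ subject to $\lone{\vec x}\le k$ is identical to maximizing $\sigma$ over seed sets of size at most $k$. Combining this with the instance-level bijection $\Phi$ onto $\mathcal G_0$ completes the correspondence; the main obstacle throughout is the bookkeeping in the previous paragraph, verifying that the deterministic seeding and incentive-independent edges collapse the double sum to the classical influence spread.
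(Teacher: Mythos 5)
Your construction is exactly the paper's: two incentive levels with $p(u,0)=0$, $p(u,1)=1$, edge probabilities independent of the level, and the observation that this map is injective onto the subclass of GIM instances of that form. Your additional bookkeeping showing that $\mathbb{A}(\vec x)$ collapses to the classical expected spread $\sigma(T_{\vec x})$ and that feasible regions match is correct and goes slightly beyond what the paper writes out (it leaves that verification implicit), but the approach is the same.
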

\begin{proof}[Proof of Proposition \ref{prop:im}]
With exactly two levels, our GIM can encapsulate the 
classical IM problem with the IC model \citep{Kempe2003}.
Let weighted social network $G= (V,E)$ and budget $k$ be given,
as an instance of the IM problem. This instance corresponds
to one of GIM with the same network and budget, as follows. 
For each edge $(u,v) \in E$,
let $w(u,v)$ be its weight. Then we assign $p(u,v,i) = w(u,v)$ for
each $i \in \{0,1\}$. 
Each incentive vector $\vec x$ is a binary vector, indicating
which users are present in the seed set; \ie $p(u,0)=0$ and
$p(u,1) = 1$, for all $u \in V$.
This mapping is injective and hence invertible.
\end{proof}
\begin{proposition} \label{prop:boost}
  There is a natural one-to-one correspondence between
  instances of the boosting problem 
  and a subclass of instances of GIM.
\end{proposition}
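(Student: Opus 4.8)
The plan is to exhibit the boosting problem as the special case of GIM in which the initial seed set is deterministic and incentives act only through the incoming edge weights. First I would recall the boosting setup of \citet{Lin2017}: a network $G = (V,E)$ with a fixed seed set $S \subseteq V$, a per-node boost function that raises the weights of a node's incoming edges as a function of its boost level, and a budget $k$; the objective is to maximize the increase in expected spread under the IC model relative to the unboosted spread from $S$. The goal is to realize every such instance inside the GIM family defined in Section \ref{sect:IM}.

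Given a boosting instance, I would construct the corresponding GIM instance on the same graph $G$, with the same budget $k$ and with $L$ equal to the number of boost levels, by specifying the two families of probability mappings as follows. For the self-activation mappings, I set $p(u,0) = 1$ for every $u \in S$, so that each fixed seed is activated deterministically at incentive level $0$ and consumes no budget, and $p(u,i) = 0$ for every $u \notin S$ and every level $i$, which switches off partial seeding so that incentives can never create new seeds. For the edge mappings, I set $i \mapsto p(v,u,i)$ equal to the boosted weight of edge $(v,u)$ when $u$ has boost level $i$; thus raising the coordinate $\vec x_u$ has exactly the effect of boosting $u$, and no other effect.

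The central step is to verify that the two objectives coincide. Under the construction, the random seed set $T$ appearing in $\mathbb{I}(\vec x)$ is almost surely equal to $S$, because the seeds activate with probability $1$ and non-seeds with probability $0$; consequently the factor $p^{\vec x}(H,T)$ collapses to a product over edges whose weights are precisely the boosted edge probabilities, and $\mathbb{I}(\vec x)$ equals the expected spread from $S$ under the boost allocation encoded by $\vec x$. In particular $\mathbb{I}(\vec 0)$ is the unboosted spread from $S$, so $\mathbb{A}(\vec x) = \mathbb{I}(\vec x) - \mathbb{I}(\vec 0)$ is exactly the boosting objective. I would then identify the bijection explicitly: each boost allocation maps to the incentive vector $\vec x$ whose coordinate $\vec x_u$ is the boost level of $u$ (with $\vec x_u = 0$ on $S$), a map that is manifestly injective and invertible onto the subclass of GIM instances just described and that preserves the budget constraint $\lone{\vec x} \le k$.

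I expect the main obstacle to be bookkeeping around the normalization rather than anything deep: one must check that forcing deterministic seeds does not alter the reachability computation $R(H,T)$, and that the subtraction of $\mathbb{I}(\vec 0)$ in $\mathbb{A}$ matches the baseline used by \citet{Lin2017}, namely the spread of the fixed seed set with no boosting. Care is also needed to confirm that setting $p(u,i) = 0$ for $u \notin S$ genuinely isolates the boosting mechanism, so that the only way to increase $\mathbb{A}(\vec x)$ is through the incoming edge weights, exactly as in the boosting problem; this is what guarantees the correspondence is onto the intended subclass and not merely into the larger GIM family.
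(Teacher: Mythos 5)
Your proposal is correct and follows essentially the same construction as the paper: make the seeds deterministic by setting the self-activation probabilities to $1$ on $S$ and $0$ off $S$ (independent of incentive level), and encode the boost entirely in the incoming-edge mappings $i \mapsto p(v,u,i)$, so that $\mathbb{A}(\vec x)$ coincides with the boosting objective. The only differences are cosmetic — you allow $L$ boost levels where the paper uses the two-level formulation of \citet{Lin2017}, and you should also fix $p(u,i)=1$ for $u \in S$ at \emph{all} levels $i$ (as the paper does) so the instance is fully specified and incentivizing a seed cannot alter the seed set.
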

\begin{proof}[Proof of Proposition \ref{prop:boost}]
Let social network $G = (V,E)$, seed set $S$, and $k \in \nats$
be given as an instance of the boosting problem, where edge
$(u,v) \in E$ has weight $p(u,v)$ if $v$ is not boosted,
and weight $p'(u,v)$ if $v$ is boosted.
The corresponding instance of GIM has two levels.
Set $p(y,0) = p(y,1) = 1$ for all $y \in S$ and set $p(y,0)=p(y,1)=0$ for all
$y \not \in S$. For each edge $(u,v) \in E$, set $p(u,v,0) = p(u,v)$, $p(u,v,1)=p'(u,v)$.
Hence, spending budget to incentivize a node from level 0 to level 1
does not affect the initial seed set, which is always $S$. But this incentive
does work in exactly the same way as the boosting of a node by changing its
incoming edge probabilities; hence, the objective values are the same. 
This mapping is injective and hence invertible.
\end{proof}
\begin{proof}[Proof of Theorem \ref{thm:dr-lower-bound}]
\begin{claim} \label{clm:suffices3}
  Suppose $\gamma \delta_{\vec s}( \vec w ) \le \delta_{\vec s}( \vec v )$,
  where $\gamma = c_e^{-k\Delta}c_n^{-k}$, and $\vec v,\vec w$ are any vectors
  satisfying $\vec v \le \vec w$, $\lone{ \vec v }\le k$, and $\lone{\vec w - \vec v} \le k$.
  Then the result of Theorem \ref{thm:dr-lower-bound} follows.
\end{claim}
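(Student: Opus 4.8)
The plan is to treat Claim \ref{clm:suffices3} purely as a reduction: assuming the coordinate-wise inequality $\gamma \delta_{\vec s}(\vec w) \le \delta_{\vec s}(\vec v)$ holds for every pair $\vec v \le \vec w$ with $\lone{\vec v} \le k$ and $\lone{\vec w - \vec v} \le k$ (where $\gamma = c_e^{-k\Delta}c_n^{-k}$), I would show that each non-submodularity parameter named in Theorem \ref{thm:dr-lower-bound} is at least $\gamma$: the StandardGreedy, ThresholdGreedy and FastGreedy submodularity ratios, the ThresholdGreedy DR ratio, and the FastGreedy DR ratio $\beta^*$. The one structural fact that drives everything is that every greedy vector $\vec g^i$ produced by any of these algorithms satisfies $\lone{\vec g^i} \le k$, so all the vector pairs appearing in the greedy-parameter definitions lie in exactly the regime where the hypothesis is available. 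I emphasize at the outset that nothing below reopens the hypothesis itself, which is where the GIM structure (and the constants $c_e,c_n,\Delta$) would enter.

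For the submodularity ratios I would mimic the proof of Proposition \ref{prop:sm-ratio}, localized to a greedy vector. Fix a greedy vector $\vec g^i$ and a vector $\vec w$ with $\vec g^i \le \vec w \le \vec b$ and $\lone{\vec w - \vec g^i} \le k$, as in the definition. Writing $\{\vec w - \vec g^i\} = \{s_1,\dots,s_l\}$ and telescoping,
\[ f(\vec w) - f(\vec g^i) = \sum_{j=1}^{l} \delta_{\vec s_j}\bigl(\vec g^i + \vec s_1 + \dots + \vec s_{j-1}\bigr), \]
I would apply the coordinate-wise bound to each summand with $\vec v = \vec g^i$ and larger vector $\vec g^i + \vec s_1 + \dots + \vec s_{j-1}$; the hypothesis applies because $\lone{\vec g^i} \le k$ and the partial increment has $\ell_1$-norm $j-1 < l \le k$. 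Summing the resulting bounds $\gamma\,\delta_{\vec s_j}(\vec g^i + \vec s_1 + \dots + \vec s_{j-1}) \le \delta_{\vec s_j}(\vec g^i)$ yields $\gamma\,[f(\vec w) - f(\vec g^i)] \le \sum_{s \in \{\vec w - \vec g^i\}} \delta_{\vec s}(\vec g^i)$, so $\gamma$ is feasible in the definition and the (maximal) submodularity ratio is at least $\gamma$. Since this uses only that the $\vec g^i$ are greedy vectors of $\ell_1$-norm at most $k$, it applies verbatim for $\mathcal A \in \{\text{StandardGreedy}, \text{ThresholdGreedy}, \text{FastGreedy}\}$.

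The DR-type ratios I would handle more directly. For the ThresholdGreedy DR ratio, the relevant pair is $\vec g^{i,s} \le \vec g^i$, an earlier and a later value of the greedy vector, both of $\ell_1$-norm at most $k$; the hypothesis then gives $\gamma\,\delta_{\vec s}(\vec g^i) \le \delta_{\vec s}(\vec g^{i,s})$ immediately, witnessing $\gamma_d^{\mathrm{TG},\mathcal I} \ge \gamma$. For the FastGreedy DR ratio $\beta^*$, I would reuse the argument of Lemma \ref{lemm:beta}: its sole appeal to the DR ratio is the passage from $\delta_{\vec s}(\vec g^s) < \beta\kappa m$ to $\delta_{\vec s}(\vec g) \le \kappa m$ for vectors $\vec g^s \le \vec g$ that both respect the budget, and this is exactly an instance of the restricted hypothesis. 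Thus the hypothesis plays the role of $\gamma_d$ in that lemma and lower-bounds $\beta^*$ accordingly (retaining the factor $\delta$ recorded there).

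The technically delicate part is not the inequalities but the uniform bookkeeping certifying that the hypothesis is applicable at each invocation: for every parameter I must confirm that the two vectors fed into the coordinate-wise bound simultaneously satisfy $\vec v \le \vec w$, $\lone{\vec v} \le k$, and $\lone{\vec w - \vec v} \le k$. The subtle cases are the telescoping partial sums, whose increments must be shown to stay within budget (which follows from $\lone{\vec w - \vec g^i} \le k$), and the greedy-DR pairs, where one verifies $\vec g^{i,s} \le \vec g^i$ from the order of execution. Once this casework is dispatched uniformly across the parameter definitions, the claim follows with no further appeal to the structure of GIM.
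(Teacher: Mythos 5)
Your proposal is correct and follows essentially the same route as the paper's own proof: a telescoping decomposition at each greedy vector to lower-bound the greedy submodularity ratios by $\gamma$, and a rerun of the argument of Lemma \ref{lemm:beta} with the restricted hypothesis standing in for $\gamma_d$ to obtain $\beta^* \ge \gamma\delta$. Your extra treatment of the ThresholdGreedy DR ratio and the explicit norm bookkeeping for the partial sums are harmless additions the paper leaves implicit.
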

\begin{proof}
  Suppose the hypothesis of the claim holds. 
  \paragraph{Greedy submodularity ratios.}
  We will show $\gamma \le \gamma_s^{\mathcal A, \mathcal I}$, where $\gamma_s^{\mathcal A, \mathcal I}$
  is the FastGreedy submodularity ratio on instance $\mathcal I$. The proofs for the other
  greedy submodularity ratios are exactly analogous. 

  Let $\vec g^1, \ldots, \vec g^T$ be the greedy vectors in the definition of
  FastGreedy submodularity ratio. Let $i \in \{1, \ldots, T\}$,
  and let $\vec v = \vec g^i$. 
  Let $\vec w \ge \vec v \in \lattice$ such that
  $\lone{\vec w - \vec v} \le k$. Let $\{\vec w - \vec v \} = \{s_1,...,s_l\}$.
  Then $\vec v + \sum_{j=1}^l \vec s_j = \vec w$. In addition, for
  every $m \leq l$, $\vec v + \sum_{j=1}^m \vec s_j = \vec v_m$
  where $\vec v_m \in \lattice$, $\vec v_m \leq \vec w$. Then,
  \begin{align*}
    \gamma(f(\vec w) - f(\vec v)) & =
      \gamma \sum_{j=1}^l f(\vec v + \vec s_1 + ... + \vec s_j) -
      f(\vec v + \vec s_1 + ... + \vec s_{j-1}) \\
    & = \gamma \sum_{j=1}^l \delta_{\vec s_j}(\vec v + \vec s_1 + ... + \vec s_{j-1}) \\
    & = \gamma \sum_{j=1}^l \delta_{\vec s_j}(\vec v_{j-1}) \\
    & \leq \sum_{j=1}^l \delta_{\vec s_j}(\vec v),
  \end{align*}
  by the hypothesis of the claim.
  Therefore $\gamma \leq \gamma_s^{\mathcal A, \mathcal I}$, 
  since the latter is the maximum number satisfying
  the above inequality for each $\vec w, \vec g^i$ as above.
  \paragraph{FastGreedy DR ratio $\beta^*$.}
  Initally, $\beta = 1$; it decreases by a factor of $\delta \in (0,1)$
  at most once per iteration of the \textbf{while} loop of FastGreedy.
  Suppose $\beta \le \gamma$ for some iteration 
  $i$ of the \textbf{while} loop, and let $\vec g$ have the value 
  assigned immediately after iteration $i$, $m$ have the value
  assigned after line \ref{line:greedy} of iteration $i$. Then 
  Since a valid pivot was found 
  for each $s \in S$ during iteration $i$, by property (\ref{prop:2}) there exists
  $\vec g^s \le \vec g$, $\delta_{\vec s} ( \vec g^s ) < \beta \kappa m \le \gamma \kappa m$.
  Hence $\delta_{\vec s} (\vec g) \le \kappa m $, 
  since $\vec g$, $\vec g^s$ are vectors satisfying the conditions on $\gamma$
  in the hypothesis of the claim.
  In iteration $i + 1$, $m'$ has the value of $m$ from iteration $i$,
  so the value of $m$ computed during iteration $i + 1$ is at 
  most $\kappa m'$, and $\beta$ does not decrease during iteration $i$.
  It follows that $\beta^* \ge \gamma \delta$.
\end{proof}
Let  $\vec v \le \vec w$, $\lone{ \vec{ v - w } } \le k$.
We will consider graph realizations $H$ that have the status of all edges
determined; and seed sets $T \subseteq V$. 

Then,
\begin{align}
  p^{\vec w + \vec{s} }( H, T ) &= K_1(H, T) p^{\vec v + \vec{s}}( H, T ), \text{ and}\\
  p^{\vec w}( H, T ) &= K_2(H,T) p^{\vec v}( H, T ),
\end{align}
where 
$K_{1}(H, T) = K_1(H) K_1(T)$,
$K_2(H,T) = K_2(H)K_2(T)$ with 
\begin{align*} K_2 ( H ) &= \prod_{(x,y) \in E} \left( \frac{ p(x,y,\vec{w}_y)}{ p(x,y,\vec{v}_y )} \right)^{\ind{ (x,y) \in H}} \left( \frac{ 1 - p(x,y, \vec{w}_y )}{ 1 - p(x,y, \vec{v}_y )} \right)^{\ind{ (x,y) \not \in H}},\\
K_2(T) &= \prod_{x \in V} \left( \frac{ p(x, \vec{w}_x )}{ p( x, \vec{v}_x )} \right)^{\ind{x \in T}} \left( \frac{1 - p(x, \vec{w}_x)}{1 - p( x, \vec{v}_x) } \right)^{\ind{x \not \in T}},
\end{align*} 
and the definitions of $K_1(H),K_1(T)$ are analogous to the above with 
vectors $\vec{v + s}, \vec{w + s}$ in place of $\vec{v},\vec{w}$.
\begin{lemma}\label{lemm:gamma-bound} Let $\Delta$ be the maximum in-degree
  in $G$. 
\[ K_1 (H,T) \le K_2(H,T) \le c_e^{k\Delta} c_n^{k}\]
\end{lemma}
\begin{proof}
  \emph{(a)} $K_1( T ) \le K_2( T )$: if $s \in T$,
  \begin{align*}
    K_1( T ) &= K_2( T ) \cdot \frac{p(s, \vec{v}_s)}{p(s, \vec{w}_s)} \cdot \frac{p(s, \vec{w}_s + 1)}{p(s, \vec{v}_s + 1)}\\ 
             &=  K_2( T ) \cdot \frac{z}{z'} \cdot \frac{z' + \alpha'}{z + \alpha} \le K_2(T),
  \end{align*}
  where $\alpha' \le \alpha$ by DR-submodularity of $i \mapsto p(s, i)$, and 
  $p(s, \vec{v}_s) = z \le z'=p(s,\vec{w}_s)$ by monotonicity of the
  same mapping. 
  Otherwise, if $s \not \in T$, 
  \begin{align*}
    K_1( T ) &= K_2( T ) \cdot \frac{1 - p(s, \vec{v}_s)}{1 - p(s, \vec{w}_s)} \cdot \frac{1 - p(s, \vec{w}_s + 1)}{1 - p(s, \vec{v}_s + 1)}\\ 
             &=  K_2( T ) \cdot \frac{z}{z'} \cdot \frac{z' - \alpha'}{z - \alpha} \le K_2(T),
  \end{align*}
  where as before $\alpha' \le \alpha$ by DR-submodularity, 
  but $1 - p(s, \vec{v}_s) = z \ge z'= 1 - p(s,\vec{w}_s)$.

  \emph{(b)} $K_1(H) \le K_2(H)$: 
  if $(u,s) \in H$,
  \begin{align*}
    K_1( H ) &= K_2( H ) \cdot \frac{p(u,s, \vec{v}_s)}{p(u,s, \vec{w}_s)} \cdot \frac{p(u,s, \vec{w}_s + 1)}{p(u, s, \vec{v}_s + 1)}\\ 
             &=  K_2( H ) \cdot \frac{z}{z'} \cdot \frac{z' + \alpha'}{z + \alpha} \le K_2(H),
  \end{align*}
  where $\alpha' \le \alpha$ by DR-submodularity of $i \mapsto p(u,s, i)$, and 
  $p(u,s, \vec{v}_s) = z \le z'=p(u,s,\vec{w}_s)$ by monotonicity of the
  same mapping. 
  Otherwise, if $(u,s) \not \in H$, 
  \begin{align*}
    K_1( H ) &= K_2( H ) \cdot \frac{1 - p(u,s, \vec{v}_s)}{1 - p(u,s, \vec{w}_s)} \cdot \frac{1 - p(u,s, \vec{w}_s + 1)}{1 - p(u,s, \vec{v}_s + 1)}\\ 
             &=  K_2( H ) \cdot \frac{z}{z'} \cdot \frac{z' - \alpha'}{z - \alpha} \le K_2(H),
  \end{align*}
  where as before $\alpha' \le \alpha$ by DR-submodularity, 
  but $1 - p(u,s, \vec{v}_s) = z \ge z'= 1 - p(u,s,\vec{w}_s)$.

  \emph{(c)} $K_2(H,T) \le c_e^{k\Delta}c_n^k$: 
  \begin{align*}
    K_2(H,T) = K_2(H)K_2(T) = \prod_{(x,y) \in E} \left( \frac{ p(x,y,\vec{w}_y)}{ p(x,y,\vec{v}_y )}\right)^{\ind{ (x,y) \in H}} &\left( \frac{ 1 - p(x,y, \vec{w}_y )}{ 1 - p(x,y, \vec{v}_y )} \right)^{\ind{ (x,y) \not \in H}} \\ 
                                                                                                      &\prod_{x \in V} \left( \frac{ p(x, \vec{w}_x )}{ p( x, \vec{v}_x )} \right)^{\ind{x \in T}} \left( \frac{1 - p(x, \vec{w}_x)}{1 - p( x, \vec{v}_x) } \right)^{\ind{x \not \in T}}
  \end{align*}
  Each of the fractions in the above product is of the form
  $\xi ( \vec w_y ) / \xi ( \vec v_y )$, where $\vec w_y \ge \vec v_y$ and hence can
  be written
  \begin{align*}
    \frac{\xi ( \vec w_y )}{\xi ( \vec v_y )} = \prod_{i = 1}^{\vec w_y - \vec v_y} \frac{\xi( \vec v_y + i + 1 )}{ \xi (\vec v_y + i )} \le (\vec w_y - \vec v_y) \max_j \frac{\xi (j + 1)}{\xi (j)}.
  \end{align*}
  Hence, by the fact that $\lone{\vec w - \vec v} \le k$ and the definitions of $c_e,c_n$, 
  and the maximum in-degree $\Delta$ in $G$, we have
  $$K_2(H,T) \le c_e^{k \Delta}c_n^k.$$
\end{proof}
Finally, by Lemma \ref{lemm:gamma-bound}, we have
\begin{align*}
  \mathbb{A}( \vec w + \vec s ) - \mathbb A( \vec w ) &= \sum_{H, T} \left( p^{\vec{w + s}} ( H, T ) - p^{\vec{w}}( H, T ) \right)R( H, T ) \\
                                                      &= \sum_{H, T} \left( K_1(H,T) p^{\vec{v + s}} ( H, T ) - K_2(H,T) p^{\vec{v}}( H, T ) \right)R( H, T ) \\
  &\le \sum_{H, T} K_2(H,T) \left( p^{\vec{v + s}} ( H, T ) - p^{\vec{v}}( H, T ) \right)R( H, T ) \\
  &\le {c_e}^{k\Delta}c_n^{k} \left( \mathbb A( \vec{v} + \vec{s} ) - \mathbb A( \vec{v} ) \right).
\end{align*}
Therefore, the hypothesis of Claim \ref{clm:suffices3} is satisfied, and the result follows.
\end{proof}
\section{Additional Experimental Results} \label{apx:experiments}
\subsection{Characterizing the Parameters of FastGreedy}\label{sect:experiments-parameters}
\begin{figure}
    \centering
    \subfigure[$\kappa$]{\includegraphics[width=0.22\textwidth,height=0.13\textheight]{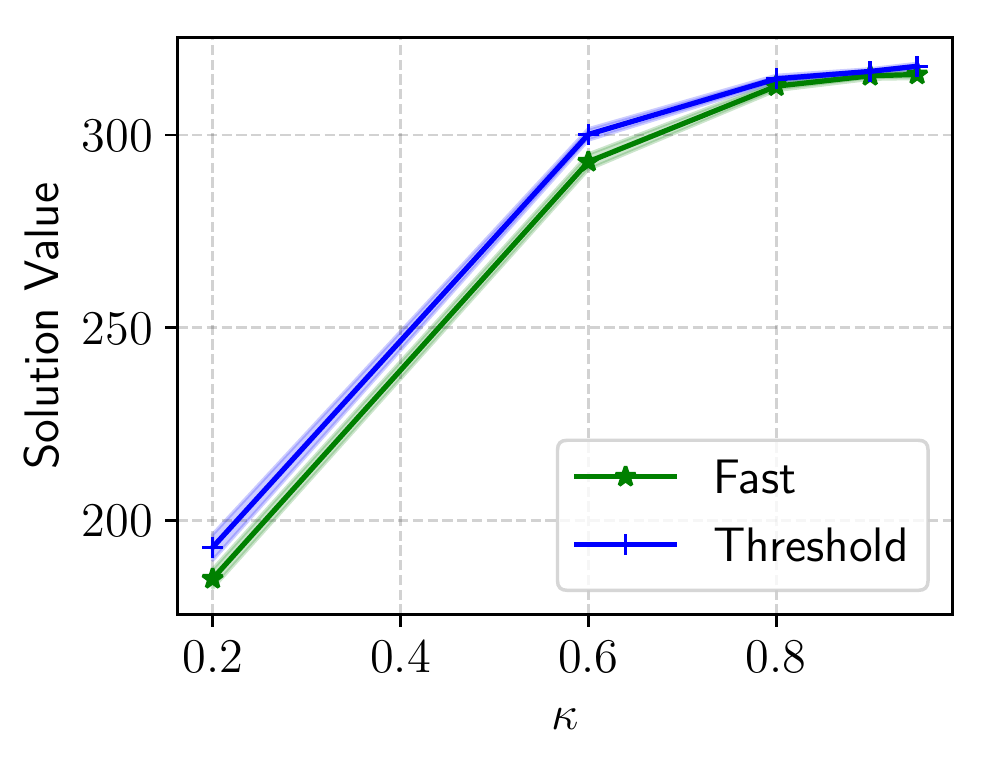}}
    \subfigure[$\delta$]{\includegraphics[width=0.22\textwidth,height=0.13\textheight]{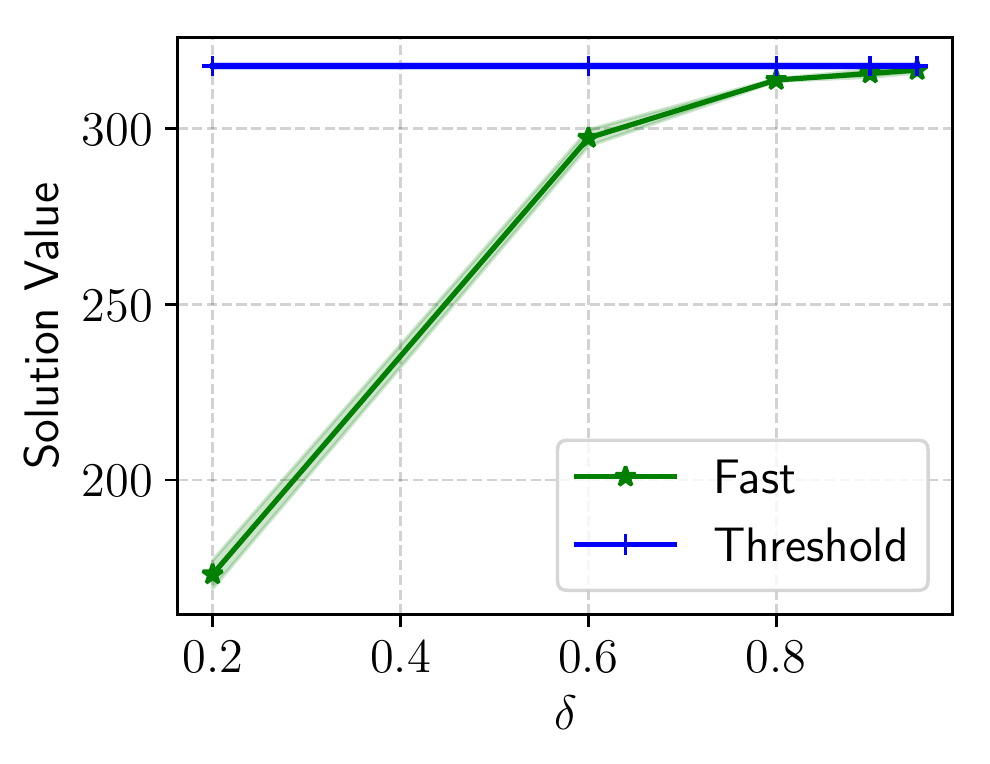}}
    \caption{\label{fig:perf-varying}Performance as $\delta$ and $\kappa$ are varied. Note that ThresholdGreedy does not use $\delta$.}
\end{figure}

\begin{figure}
    \centering
    \subfigure[$\kappa$]{\includegraphics[width=0.22\textwidth,height=0.13\textheight]{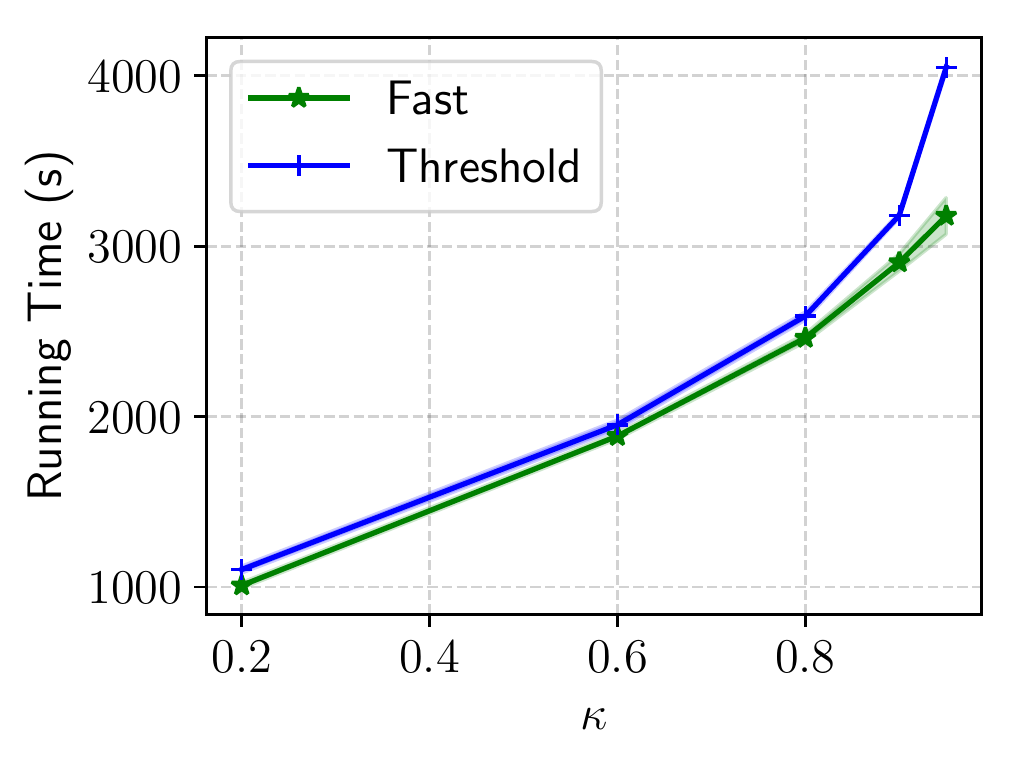}}
    \subfigure[$\delta$]{\includegraphics[width=0.22\textwidth,height=0.13\textheight]{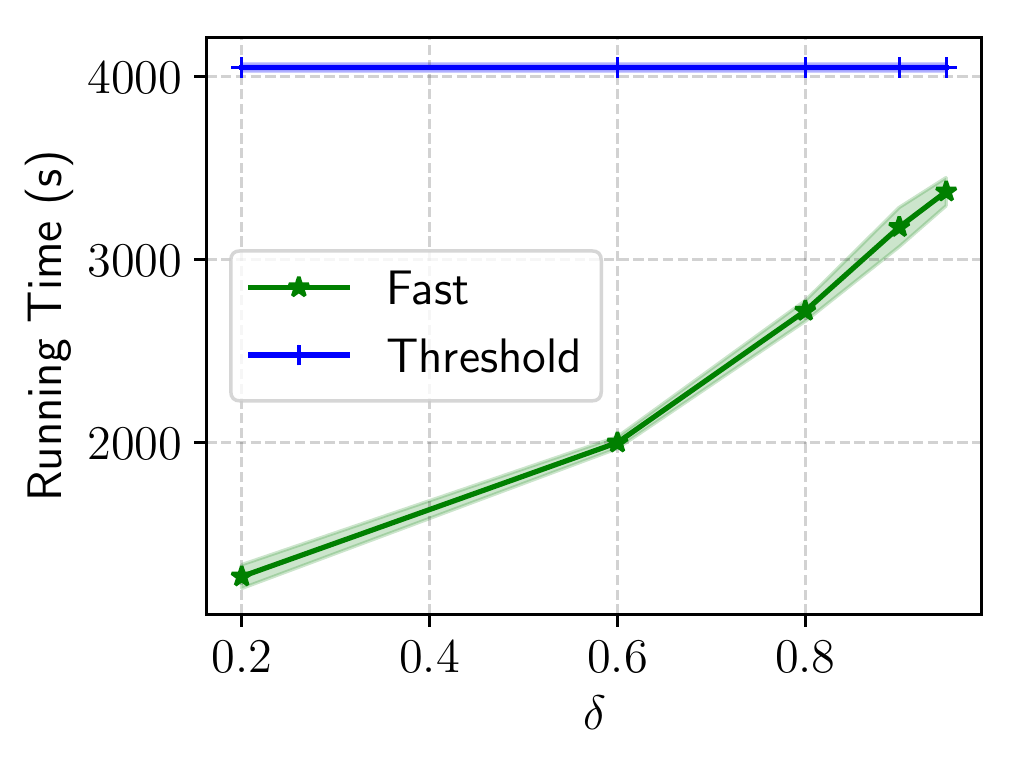}}
    \caption{\label{fig:runtime-varying}Running time as $\delta$ and $\kappa$ are varied. Note that ThresholdGreedy does not use $\delta$.}
\end{figure}
In this section, we evaluate the impact of varying the parameters of FastGreedy: 
$\epsi$, $\delta$, and $\kappa$. We note that $\epsi$ only
impacts the running time and performance if it is used as a stopping
condition. However, in our experiments this did not occur: the
cardinality constraint was reached first. Therefore, in our experiments
FastGreedy = FastGreedy* and we are free to set $\epsi = 0$ without changing
any of our results.

Figures \ref{fig:perf-varying} \& \ref{fig:runtime-varying} show the
impact of $\delta$ \& $\kappa$ on performance and running time. Note
that performance remains similar until $\kappa$ or $\delta$ drops below
\num{0.6}. However, the running time plummets to nearly half of what it
is at \num{0.95} in each case, resulting in a similar-quality solution
in significantly less time. A natural follow-up question from this
figures is: what happens when both parameters are varied at once? Fig.
\ref{fig:kappa-delta} details the answer. In particular, we note that
the steep drop in running time remains present, and there is a
reasonable gain to be had by dropping both parameters at once -- up to a
point.

\begin{figure}
    \centering
    \subfigure[Solution Quality]{\includegraphics[width=0.22\textwidth,height=0.13\textheight]{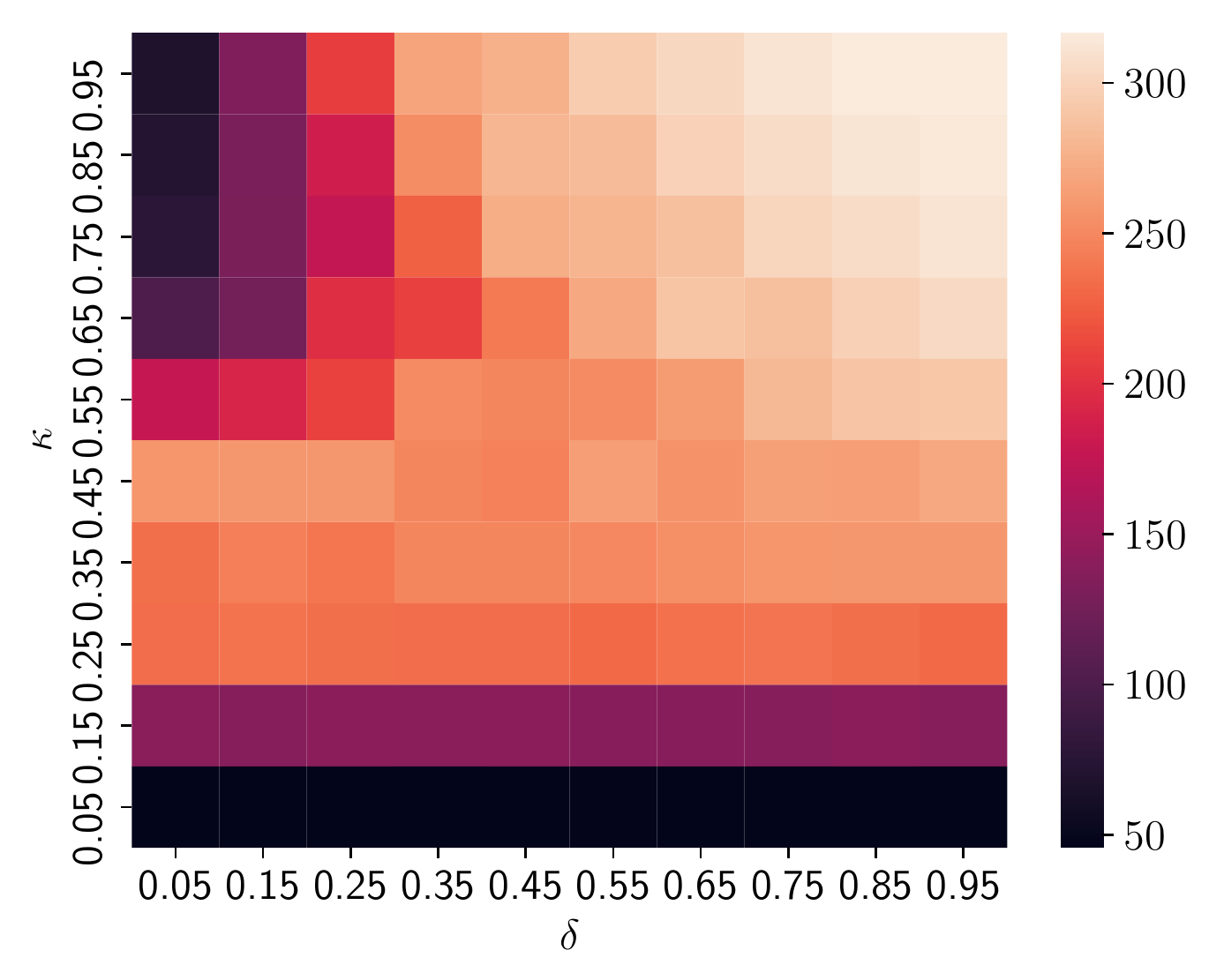}}
    \subfigure[Running Time]{\includegraphics[width=0.22\textwidth,height=0.13\textheight]{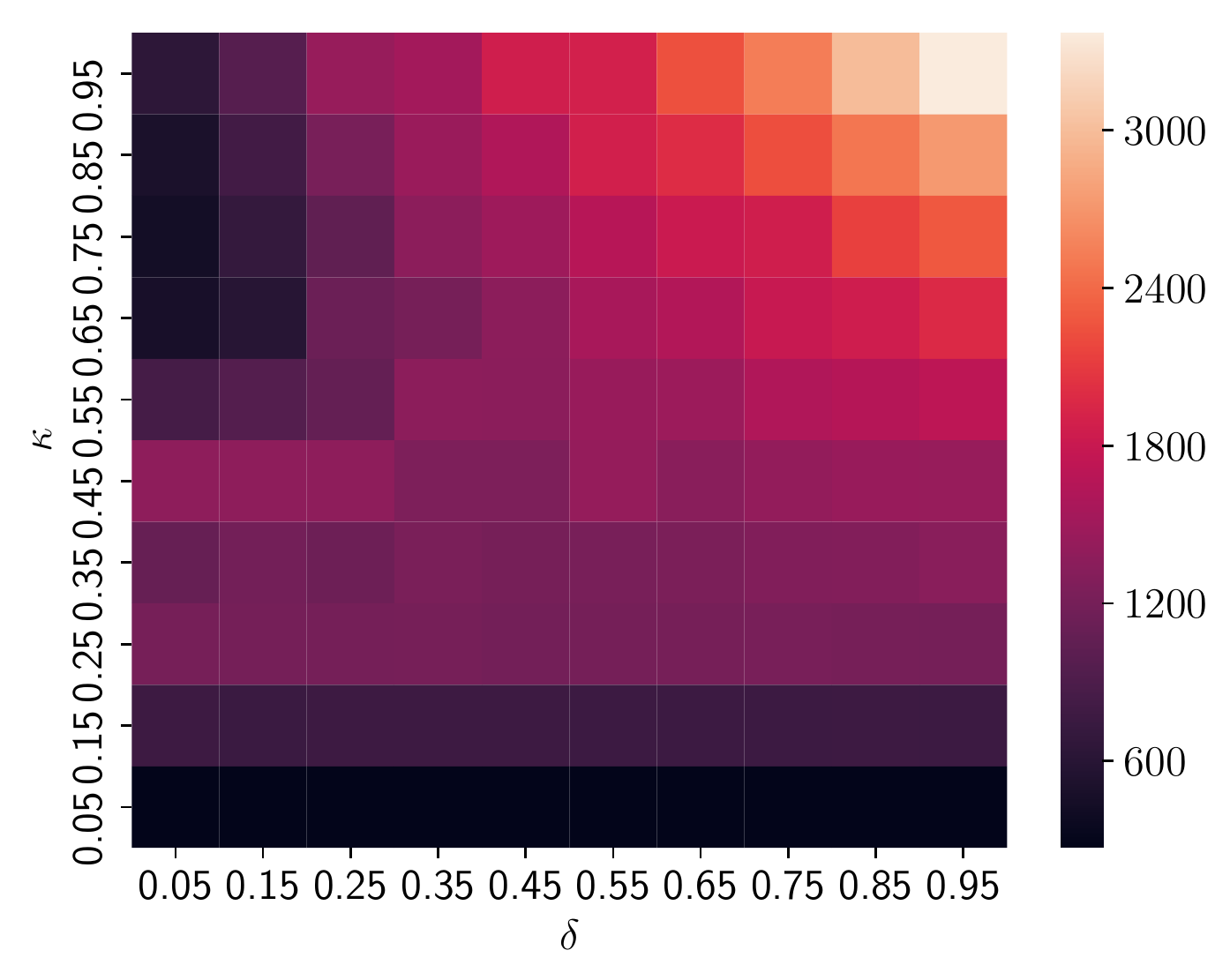}}
    \caption{\label{fig:kappa-delta}Performance and running time of FastGreedy as $\kappa$ and $\delta$ are simultaneously varied.}
\end{figure}

We further notice some interesting patterns in these heat-maps. When
$\kappa$ is near 0, the selection of $\delta$ does not appear to matter.
This is likely due to the role each parameter plays: $\kappa$ plays a
critical role in identifying a DR-violation, at which point $\delta$ is
the rate by which $\beta$ is reduced to compensate. When $\kappa$ is
small, it takes commensurately larger violations for $\delta$ to apply.
These larger violations are not seen in our simulations, and thus
$\delta$ has no impact when $\kappa$ is very small.
\section{Implementation Notes} \label{apx:implementation}
The \texttt{facebook} dataset originally is undirected; we replace each edge $u
\leftrightarrow v$ with two edges $u \rightarrow v$ and $v \rightarrow
u$. 

We calculated the FastGreedy ratio using the values $\kappa=0.95$,
$\beta^* = 0.9$, $\gamma_s = 0.69857$, $\epsi = 0$. The values of
$\kappa$ and $\epsi$ are taken from the parameters used to run the
algorithm ($\epsi$ can be 0 since the algorithm always 
returned $\vec g$ with $\lone{\vec g}=k$), 
and the values of $\beta^*$ and $\gamma_s$ are the minimum
over all instances where FastGreedy Submodularity Ratio 
was computed.

\subsection{Evaluating $\delta_\vec{s}(\vec{g})$}
As mentioned in Section \ref{sect:experiments}, we evaluate the objective
on a set of 10,000 Monte Carlo samples. However, for performance reasons
we do not evaluate the marginal gain $\delta_\vec{s}(\vec{g})$ directly by computing
each of $f(\vec{g} + \vec{s})$ and $f(\vec{g})$ and then subtracting. Instead, we compute
the expected number of activations across the sample set were $\vec{s}$ to be
added. We accomplish this as follows.

First, we mantain a state associated with the vector $\vec{g}$. This state
contains a number of variables for logging purposes, in addition to two
of note for our discussion here: \texttt{samples} and \texttt{active}.
The former is the list of sampled graphs, each represented as a pair of
vectors of floating point numbers corresponding to random thresholds
assigned to each node and edge. The latter is a list of sets of nodes
currently active in each sampled graph under solution vector $\vec{g}$. These
active sets are computed only when a new element (or elements) are added
to the solution vector and are computed directly by (a) computing the
set of externally activated nodes by checking if the random threshold is
sufficient to activate the node; then (b) propagating across any active
edges according to their thresholds. The code for this is contained in
the \texttt{active\_nodes} function of \texttt{src/bin/inf.rs} in the
code distribution.

Given this representation, to estimate the marginal gain of $\ell$ copies of a node $s$ \textit{on each sample} $S_i$ as follows:
\begin{enumerate}
    \item Check if the node is already active on $S_i$. If so, return 0 for this sample.
    \item Check if the node would be externally activated if added to the solution. If not, return 0.
    \item Check if the node would be activated by neighboring nodes if added to the solution. If not, return 0.
    \item Compute the set of nodes that would be newly activated if $s$ becomes active via breadth-first-search from $s$. Return this number $c$.
\end{enumerate}
Each would-be-activated check is accomplished by comparing the
activation probability of the node or edge to the random threshold
associated with it in the sample. Then the expected marginal gain is the
average result across all samples $S_i$. When an element (with
multiplicity) is chosen to be inserted into the
solution, we add it to the solution vector, discard and recompute all
samples, and then recompute the \texttt{active} set on each sample from
scratch. This is in line with prior Monte-Carlo-based solutions
for IM \cite{Kempe2003}.

While in theory it is possible to consider the marginal gain of an
arbitrary vector $\vec{v}$, in our implementation we restrict the values
it can take to $\vec{v} = \ell \vec{s}$, where $\vec{s}$ is the unit
vector for node $s$. This simplifies each of the steps above. The
implementation of the above is contained in the functions \texttt{delta}
and \texttt{scaled\_delta} in \texttt{src/bin/inf.rs}.

\end{document}